\newtheorem{remark}{Remark}
\newtheorem{proposition}{Proposition}
\newtheorem{corollary}{Corollary}
\newtheorem{lemma}{Lemma}
\newtheorem{definition}{Definition}
\newtheorem{example}{Example}
\newtheorem{theorem}{Theorem}
\title{Fair Allocation with Special Externalities}
\author{{Shaily Mishra} \\
% 	Machine Learning Lab\\
	International Institute of\\ Information Technology (IIIT)\\
	Hyderabad, India \\
	\texttt{shaily.mishra@research.iiit.ac.in} \\
	%% examples of more authors
	\And
	{Manisha Padala} \\
% 	Machine Learning Lab\\
	International Institute of\\ Information Technology (IIIT)\\
	Hyderabad, India \\
	\texttt{manisha.padala@research.iiit.ac.in} \\
	\And
	Sujit Gujar \\
	International Institute of\\ Information Technology (IIIT)\\
	Hyderabad, India \\
	\texttt{sujit.gujar@iiit.ac.in} \\
	%% \AND
	%% Coauthor \\
	%% Affiliation \\
	%% Address \\
	%% \texttt{email} \\
	%% \And
	%% Coauthor \\
	%% Affiliation \\
	%% Address \\
	%% \texttt{email} \\
	%% \And
	%% Coauthor \\
	%% Affiliation \\
	%% Address \\
	%% \texttt{email} \\
}
\begin{document}
\maketitle

\begin{abstract}
Most of the existing algorithms for fair division do not consider externalities. Under externalities, the utility an agent obtains depends not only on its allocation but also on the allocation of other agents. An agent has a positive (negative) value for the assigned goods (chores). This work focuses on a special case of externality, i.e., an agent receives positive or negative value for unassigned items independent of which other agent gets it. We show that it is possible to adapt existing algorithms using a transformation to ensure certain fairness and efficiency notions in this setting. Despite the positive results, fairness notions like proportionality need to be re-defined. Further, we prove that maximin share (MMS) may not have any multiplicative approximation in this setting. Studying this domain is a stepping stone towards full externalities where ensuring fairness is much more challenging.
\end{abstract}

% keywords can be removed
\keywords{Resource Allocation \and Fairness \and Externalities}

\section{Introduction}
\label{sec:intro}
We consider the problem of allocating $m$ indivisible items among $n$ agents who report their valuations for the items. The objective is to ensure fair allocation for a desirable notion of fairness. These scenarios often arise in the division of inheritance among family members, divorce settlements and distribution of tasks among workers~\cite{brams_taylor_1996,moulinfairness,Segal_Halevi_2019,propSTEIHAUS,webbcakecutting}. Economists have proposed many fairness and efficiency notions widely applicable in such real-world settings. Researchers also explore the computational aspects of some widely accepted fairness notions \cite{Caragiannis2016,barman2018finding,computationallyguarantees1,EQ1freemangoods,procaccia2014mms}. Such endeavours have led to web-based applications like Spliddit \footnote{\url{www.spliddit.org}}, The Fair Proposals System\footnote{\url{www.fairproposals.com}}, Coursematch  \footnote{\url{www.coursematch.io}}, Divide Your Rent Fairly \footnote{\url{www.nytimes.com/interactive/2014/science/rent-division-calculator.html}}, etc.
However, most approaches do not consider agents with \emph{externalities}, which we believe is restrictive.

% Fair division was initially introduced to study the division of a single heterogeneous resource \cite{brams_taylor_1996,moulinfairness,Segal_Halevi_2019,propSTEIHAUS,webbcakecutting}, focusing on \emph{envy-freeness} and \emph{proportionality} as the fairness notions. Later on, the studies also considered indivisible items \cite{Caragiannis2016,barman2018finding,computationallyguarantees1,EQ1freemangoods,procaccia2014mms}. For indivisible items, the notion of \emph{max-min share} introduced by \cite{Budish11} is well-considered. Many of the proposed results focus on existential and computational aspects of the said fairness notions. They also provide algorithms that find use in web-based solutions such as Spliddit, The Fair Proposals System, Coursematch, etc. However, most approaches do not consider agents with \emph{externalities}, which we believe is restrictive.

In general, externality implies that the agent's utility depends not only on their bundle but also on the bundles allocated to other agents. Such a scenario is relatively common, mainly in allocating necessary commodities. For example, the COVID-19 pandemic resulted in a sudden and steep requirement for life-supporting resources like hospital beds, ventilators, and vaccines. Getting a vaccination affects an agent positively. Even if someone else gets vaccinated instead of the agent, the agent values it positively, possibly less. However, not receiving a ventilator in time results in negative utility for the patient and family. Such a complex valuation structure is modeled via externalities.

Generally with externalities, the utility of not receiving an item depends on which other agent receives it. That is, each agent's valuation for an item is an $n$-dimensional vector. The $j^{th}$ component corresponds to the value an agent obtains if the item is allocated to agent $j$.
% Each component of the vector is the value obtained when the item is assigned to that agent.
In this work, we consider a special case of externalities in which the agents incur a cost/benefit for not receiving an item. Yet, the cost/benefit is \emph{independent} of which other agent receives the item. This setting is referred to as 2-D, i.e., value $v$ for receiving an item and $v'$ otherwise. When there are only two agents, the 2-D domain is equivalent to the domain with general externalities. We refer to the agent valuations in the absence of externalities as 1-D. For the 2-D domain, we consider both goods/chores with positive/negative externality for the following fairness notions.
 
% If $v'$ is positive/negative, it is a positive/negative externality respectively, and $v'$ is zero in 1-D. The items can be \emph{goods} ($v\ge0$), \emph{chores} ($v\le0$), or a \emph{combination} of both. We study the following fairness and efficiency notions in 2-D.

\smallskip
\noindent\textbf{Fairness Notions.}
Envy-freeness (EF) is the most common notion of fairness. It ensures that no agent has higher utility for other agent's allocation \cite{Foley1967ResourceAA}. Consider two agents - $\{1,2\}$ and two goods - $\{g_1,g_2\}$; agent 1 has a value of 6 for good $g_1$ and 5 for good $g_2$, while agent 2 values $g_1$ at 5 and $g_2$ at 6. Then allocating $g_1$ and $g_2$ to agent 1 and 2, respectively, is EF. Such an allocation will not be EF if we consider externalities. For example, if agent $1$ receives a negative utility of $-1$ and $-100$ for not receiving $g_1$ and $g_2$, respectively. And agent $2$  receives a negative utility of $-100$ and $-1$ for not receiving $g_1$ and $g_2$, respectively.

Externalities introduce complexity, so much that the definition of proportionality cannot be adapted to the 2-D domain. Proportionality (PROP) ensures that every agent receives at least $1/n$ of its complete bundle value \cite{propSTEIHAUS}. In the above example, each agent should receive goods worth at least $11/2$. Guaranteeing this amount is impossible in 2-D, as it does not consider the dis-utility of not receiving goods. 
Moreover, it is known that EF implies PROP in the presence of additive valuations. However, in the case of 2-D, it need not be true, i.e., assigning $g_2$ to agent 1 and $g_1$ to agent 2 is EF but not PROP.

Finally, we consider a relaxation of PROP, the maximin share (MMS) allocation. Imagine asking an agent to divide the items into $n$ bundles and take the minimum valued bundle. The agent would divide the bundles to maximise the minimum utility, which is the MMS share of the agent. An MMS allocation guarantees every agent its MMS share. Even for 1-D valuations, MMS allocation may not exist; hence researchers find multiplicative approximation $\alpha$-MMS. An $\alpha$-MMS allocation guarantees at least $\alpha$ fraction of MMS share to every agent. Authors in \cite{GargMMS2020} provides an algorithm that guarantees $3/4+1/12n$-MMS for goods and authors in \cite{Xinhuangmmschores} guarantees $11/9$-MMS for chores. 
In contrast, we prove that for 2-D valuation, it is impossible to guarantee multiplicative approximation to MMS. Thus, in order to guarantee existence results, we propose relaxed multiplicative approximation and also explore additive approximations of MMS guarantees.

% is to generalize the analysis for the existing fairness definitions to 2-D valuations for items with positive or negative externalities. We specifically focus on existence of a certain kind of fair allocation and adapting the existing algorithms to find it in 2-D.
In general, it is challenging to ensure fairness in the settings with full externality, hence the special case of 2-D proves promising. Moreover, in real-world applications, the 2-D valuations helps model various situations (e.g., COVID-19 resource allocation mentioned above).

\paragraph{Our Approach.}
There is extensive literature available for fair allocations, and we primarily focus on leveraging existing algorithms to 2-D.
Towards guaranteeing fairness notion in 2-D, we propose a property preserving transformation $\mathfrak{T}$ that converts 2-D valuations to 1-D; i.e., an allocation that satisfies a property in 2-D also satisfies it in transformed 1-D and vice-versa. Moreover, the 1-D valuations obtained via $\mathfrak{T}$ satisfy the assumptions required to apply the existing algorithms for finding fair allocations. Along with fairness, typically certain efficiency notions are also considered. Hence, we also study if our transformation retains the efficiency notions.

\smallskip
\noindent\textbf{Contributions.}
\begin{enumerate}
    \item We propose $\mathfrak{T}$ that retains fairness notions such as EF, MMS, and its additive relaxations and efficiency notions such as MUW and PO (Theorem~\ref{thm:transformation}). Thus, we can adapt the existing algorithms for the same.
    \item We introduce PROP-E for general valuations in the presence of full externalities (Section~\ref{sec:prelim}) and derive relation with existing proportionality extensions (Section~\ref{sec:propE}).
    \item We prove that multiplicative approximation of MMS may not exist in 2-D (Theorem~\ref{thm:mulmmsdoesntexist}).
    \item We propose Shifted $\alpha$-MMS which is a novel way of approximating MMS in 2-D  (Section~\ref{sec:shiftedalphamms}).
\end{enumerate}

\subsection*{Related Work}
While fair resource division has an extremely rich literature, externalities in fair division is less explored.
\citet{velez2016fairness} extend the notion of EF in externalities and explored EF allocation of indivisible goods and money among interested agents in presence of externalities.
\citet{externalitiesdivisbleijcai13} generalize PROP and EF for divisible goods allocation with positive externalities. \citet{treibich2019welfare} study egalitarian social welfare in presence of average externalities for divisible goods.  Further, \citet{externalitiesghodsi} propose average-share definition, an extension of PROP, and study (EMMS) MMS allocation for indivisible goods with positive externalities. Note that in our setting MMS share is equivalent to EMS. Authors in \cite{aziz2021fairnessexternalities} explore EF1/EFX for the specific setting of two and three agents. For two agents, their setting is equivalent to 2-D, hence existing algorithms for EF, PROP and their additive relaxations \cite{Caragiannis2016,EFXsolution,PROPXAziz2019} suffice. Beyond two agents, the setting is more general and \citet{aziz2021fairnessexternalities} prove the non-existence of EFX for three agents. In contrast, for the special case of 2-D, EFX always exists for three agents since it exists in 1-D~\cite{chaudhury2020efx}. Further \cite{aziz2021fairnessexternalities}, provide extension of PROP for additive valuations with full externalities.

We breifly summarize the existing algorithms for 1-D valuations available for each of the fairness notions.

\smallskip
\noindent\textit{Envy-freeness.} EF may not exist for indivisible items. Hence we consider two prominent relaxations of EF, Envy-freeness up to one item (EF1) \cite{Budish11,Lipton2004} and Envy-freeness up to any item(EFX) \cite{Caragiannis2016}. We have poly-time algorithms to find EF1 in general monotone valuations for goods \cite{Lipton2004} and chores \cite{vaishenvycyle}. For additive valuations, EF1 can be found using Round Robin \cite{Caragiannis2016} in goods or chores, and Double Round Robin \cite{azizdoublerrijcai19} in combination. 
% When the valuations are general,  non-monotone, and non-identical, a poly-time algorithm is given by \cite{berczi2020envyfree} for two agents. 
Authors in \cite{EFXsolution} present an algorithm to find EFX allocation under identical general valuations for goods. 
Researchers have also studied fair division in presence of strategic agents \cite{barmanaamas19strategicagents,bei2020truthful,padala2021mechanism}.

% \smallskip
% \noindent\textit{Equitability.} Relaxation to EQ are EQ1 and EQX \cite{Budish11,Caragiannis2016}. Authors in \cite{EQXalgorithm} present a polynomial-time algorithm to find EQX allocation. They present a pseudo-polynomial time algorithm to find EQ1 and PO allocation in goods and chores in \cite{EQ1GoodsFreeman,EQ1ChoresFreeman}. 

\smallskip
\noindent\textit{Proportionality.} PROP1 and PROPX are popular relaxation of PROP. For additive valuations, EF1 implies PROP1, and EFX implies PROPX.  Unfortunately, in paper \cite{PROPXAziz2019}, the authors showed the PROPX for goods may not always exists. Authors in \cite{li2021almost} explored (weighted) PROPX showed that a (weighted) PROPX allocation
always exists and can be computed efficiently.

\smallskip
\noindent\textit{MMS.} MMS allocations do not always exist \cite{procaccia2014mms,kurokawa2016}. The papers \cite{procaccia2014mms,Amanatidismms2017,Barman2018GroupwiseMF,garg2018} showed that 2/3-MMS for goods always exists. Paper \cite{ghodsi2017mms,GargMMS2020} showed that 3/4-MMS for goods always exists. Authors in \cite{GargMMS2020} provides an algorithm that guarantees $3/4+1/12n$-MMS for goods. Authors in \cite{azizmmschores} presented a polynomial-time algorithm for 2-MMS for chores. The algorithm presented in \cite{Barman2018GroupwiseMF} gives 4/3-MMS for chores. Authors in \cite{Xinhuangmmschores} showed that 11/9-MMS for chores always exists. Authors in \cite{kulkarni2021mixedmanna} explored $\alpha-$MMS for a combination of goods and chores.

\smallskip
 \noindent\textit{Fair and Efficient. } In \cite{Caragiannis2016}, the authors showed that MNW allocation is EF1 and PO for indivisible goods and \cite{barman2018finding} gave a pseudo-polynomial time algorithm. For a combination of resources, the authors in \cite{azizdoublerrijcai19} presented a polynomial-time algorithm to find EF1 and PO for two agents. An Algorithm to find PROP1 and fractional PO which is 
stronger than PO was proposed by \cite{PROPXAziz2019} for a combination of resources. Authors in \cite{azizef1um} proposed a pseudo-polynomial time for finding utilitarian maximizing among EF1 or PROP1 in goods.

% There have been various possibility/impossibility studies regarding achieving multiple fairness criteria simultaneously, also consider the fair distribution of indivisible and divisible items together\cite{EQ1GoodsFreeman,bei2020maximin,vaishenvycyle,divisibleindivisibleAAAI2020}. There's also a trade-off study of fairness and efficiency, i.e., what is the price of fairness we pay \cite{coonectionspriceoffairness,barmanpriceoffairness}. 

% We summarize the existing algorithms for 1-D valuations available for each of the fairness notions.

\smallskip

\section{Preliminaries}
\label{sec:prelim}
%%%%%%%%%%%%%%%%%%%%%%%%%%%%%%%%%%%%%%%%%%%%%%%%%%%%%%%

We consider a resource allocation problem $(N,M,\mathcal{V})$ for  determining an allocation $A$ of $M = [m]$ indivisible items among $N=[n]$ interested agents, $m,n \in \mathbb{N}$. We only allow complete allocation and no two agents can receive the same item. That is, $A=(A_1, \ldots, A_n)$, $A\in {N^M}$ s.t., $\forall i, j \in N$, $i \neq j; A_i \cap A_j = \emptyset$ and $\bigcup_i A_i = M$. We denote the allocation for all the agents except $i$ as $A_{-i}$.

\smallskip
\noindent \textbf{2-D Valuations.} The valuation function for $n$ agents is denoted by $\mathcal{V}=\{V_1,V_2,\ldots,V_n\}$. For each $i\in N$, $V_i:2^M \rightarrow \mathbb{R}^2$, i.e., $V_i \in \mathbb{R}^{2^{2^M}}$. For any bundle $S\subseteq M$, $V_i(S) = (v_i(S),v'_i(S))$, where $v_i(S)$ denotes the value for receiving bundle $S$ and $v'_i(S)$ for not receiving $S$. The value an agent $i$ has for item $k$ in 2-D is given by $(v_{ik}, v'_{ik})$. If $k$ is a good (chore), then $v_{ik} \geq 0 \ (v_{ik} \leq 0)$. For positive (negative) externality $v'_{ik} \ge 0 \ (v'_{ik} \le 0)$.

The utility an agent $i \in N$ obtains for a bundle $S \subseteq M$ is,
$$u_i(S) = v_i(S) + v'_i(M \setminus S)$$  Also, $u_i(\emptyset) = 0 + v'_i(M)$ and utilties in 2-D are not normalized\footnote{Utility is normalized when $u_i(\emptyset) = 0, \forall i$}. When agents have additive valuations, $u_i(S) = \sum_{k\in S} v_{ik} +$  $\sum_{k \notin S} v'_{ik}$. 

% \smallskip
% \noindent \emph{Assumption.} 

We assume monotonicity of utility for goods, i.e., $\forall S \subseteq T \subseteq M$, $u_i(S) \le u_i(T)$ and anti-monotonicity of utility for chores, i.e, $u_i(S) \ge u_i(T)$. We use the term \emph{full externalities} to represent complete externalities, i.e., each agent has $n$-dimensional vector for its valuation for an item.

Given the notations, we next define fairness notions considered in this paper.

\paragraph{Important Definitions.}
Since Envy-freeness (EF) may not exist for indivisible items, we consider EF1 and EFX. For goods, an allocation is EF1 when the agent values its own bundle no less than it values any other agent's bundle with the \textit{most} valued item removed. EFX is stronger than EF1 and requires that the agent values its own bundle no less than the other agent's bundle with the \textit{least} valued item removed. For chores, similar definition applies but unlike in goods, a chore is removed from the agent's own bundle and then compared with the other agents'. A common definition for is as follows,
\begin{definition}[Envy-free (EF) and relaxations \cite{azizdoublerrijcai19,Budish11,Caragiannis2016,Foley1967ResourceAA,velez2016fairness}]
For the items (chores or goods) an allocation $A$ that satisfies $\forall i,j \in N$,
\begin{align}
   &  u_i(A_i) \ge u_i(A_j)  \mbox{ is EF} \nonumber\\
    & \begin{rcases}
     v_{ik} < 0,   u_i(A_i \setminus \{ k\}) \ge u_i(A_j);\forall k \in A_i\\ 
v_{ik} > 0,   u_i(A_i ) \ge u_i(A_j \setminus \{ k\});\forall k \in A_j \ 
    \end{rcases}  \mbox{ is EFX}  \nonumber\\
   &  u_i(A_i \setminus \{ k\}) \ge u_i(A_j \setminus \{k\});\exists k \in \{A_i \cup A_j\}  \mbox{ is EF1} \nonumber
\end{align}
\end{definition}

\noindent Note that beyond 2-D, one must include the concept of swapping bundles to generalize the above definition as in~\cite{velez2016fairness,aziz2021fairnessexternalities}. We next state the definition of proportionality for 1-D.

\begin{definition}[Proportionality (PROP) \cite{propSTEIHAUS}]\label{def:prop}
An allocation $A$ is said to be proportional, if $\forall i \in N$, $u_i(A_i) \ge \frac{1}{n} \cdot u_i(M)$.
\end{definition}
For 2-D, achieving PROP is impossible as discussed in Section \ref{sec:intro}. To capture proportional allocations under externalities, we now introduce \textit{Proportionality with externality} (PROP-E). Informally, while PROP guarantees $1/n$ share of the entire bundle, PROP-E guarantees $1/n$ share of the sum of utilities for all bundles. Note that, PROP-E is not limited to 2-D and applies to a general externality setting. 
Formally,

%% PUT MORE EMPHASIS HERE

\begin{definition}[Proportionality with externality (PROP-E)]
\label{def:nprop}
An allocation $A$ satisfies PROP-E if, $\forall i \in N$,
% \MP{"N" is the set whose cardinality is "n", so replace N with n in places wherever required like in $1/n$ in the def below}
\begin{align}
    % \forall i \in N, u_i(A_i) & \geq \overline{v_i(M)}
    % \\
 u_i(A_i) &\geq \frac{1}{n}\cdot\sum_{j\in N} u_i(A_j)
\end{align}
\end{definition}

\noindent Analogous to EFX/EF1, we now define the relaxations for PROP-E for the combination of goods and chores,
\begin{definition}[PROP-E relaxations]
\label{def:propX1-E}
An allocation $A$ $\forall i, \forall j \in N$, satisfies  PROPX-E if it is PROP-E up to any item, i.e.,
\begin{align}
    & \begin{rcases}
v_{ik}>0,      u_i(A_i \cup \{k\}) \ge \frac{1}{n}\sum_{j\in N} u_i(A_j);\forall \ k \in \{M \setminus A_i\} \\
v_{ik}<0, u_i(A_i \setminus \{k\}) \ge   \frac{1}{n}\sum_{j\in N} u_i(A_j);\forall \ k \in A_i
    \end{rcases}  \nonumber\\ 
    & \mbox{Next,  $A$ satisfies PROP1-E if it is PROP-E up to an item, i.e.,} \nonumber\\
     & \begin{rcases}
u_i(A_i \cup \{k\}) \ge \frac{1}{n}\sum_{j\in N} u_i(A_j); \exists \ k \in \{M \setminus A_i\}  \mbox{ or,}\\ 
u_i(A_i \setminus \{k\}) \ge \frac{1}{n}\sum_{j\in N} u_i(A_j); \exists \ k \in A_i    \nonumber\end{rcases}
\end{align}
\end{definition}

\noindent Finally, we state the definition of MMS and its multiplicative approximation.
\begin{definition}[Maxmin Share MMS \cite{Budish11}]
\label{def:mms}
An allocation $A$ is said to be MMS if $\forall i \in N, u_i(A_i) \ge \mu_i$, where
$$\mu_i = \max_{(A_1,A_2,\ldots,A_n) \in \prod_n(M)} \min_{j \in N} u_i(A_j)$$

An allocation $A$ is said to be $\alpha$-MMS if it guarantees $u_i(A_i) \ge \alpha \cdot \mu_i $ for $\mu_i \ge 0$, where  $\alpha \in [0,1]$  and $u_i(A_i) \ge \frac{1}{\alpha} \cdot \mu_i $ when $\mu_i \le 0$, where  $1/\alpha \ge 1$ and $\alpha > 0$.
\end{definition}

\noindent Since it is common to consider efficiency with notions, we next define Pareto-optimality, a popular efficiency notions. 
\begin{definition}[Pareto-Optimal (PO)] An allocation $A$ is PO if $\; \nexists \;  A'$ s.t.,
$ \forall i \in N$, $u_i(A'_i) \ge u_i(A_i)$ and $ \exists i \in N$, $u_i(A'_i) > u_i(A_i) $.
% An allocation $A'$ is said to Pareto-dominate allocation $A$, if $\forall i \in N$, $u_i(A'_i) \ge u_i(A_i)$, and $\exists i \in N$, $u_i(A'_i) > u_i(A_i)$. An allocation $A$ is said to be Pareto Optimal, if no other allocation Pareto-dominates it.
\end{definition}
\noindent We also consider efficiency notions like Maximum Utilitarian Welfare (MUW), that maximizes the sum of agent utilities. Likewise Maximium Nash Welfare (MNW) maximizes the product of agent utilties and Maximum Egalitarian Welfare (MEW) maximizes the minimum agent utility.
% We consider utilitarian welfare, for which an allocation $A$ is the sum of agents utilities. Similarly, Nash welfare corresponds to the product of agents' utilities, and egalitarian welfare corresponds to the minimum individual agents' utility. Formally, the optimal welfare is,
%\sm{space}
% \sm{definition inline}
% \begin{definition}[] Given an instance $(N, M, \mathcal{V})$, an allocation $A^*$ satisfies,
% \begin{align}
%     \mbox{Maximum} & \mbox{ Utilitarian Welfare, MUW}(u), \ \mbox{if} \nonumber\\
%     &A^* \in \max_{A}\sum_{i=1}^{n} u_i(A_i) \\
%     \mbox{Maximum} &\mbox{ Nash Welfare, MNW}(u) \ \mbox{if} \nonumber \\
%     & A^* \in \max_{A}\prod_{i=1}^{n} u_i(A_i) \\
%     \mbox{Maximum} &\mbox{ Egalitarian Welfare, MEW}(u) \ \mbox{if} \nonumber \\
%     & A^* = \max_{A}\min_{i} u_i(A_i)
% \end{align}
% \end{definition}

In the next section, we define a transformation from 2-D to 1-D that plays a major role in adaptation of existing algorithms for ensuring desirable properties.

\section{Reduction from 2-D to 1-D}
\label{sec:reductiontu}
We define a transformation $\mathfrak{T}: \mathcal{V} \rightarrow \mathcal{W}$, where $\mathcal{V}$ is the valuations in 2-D, i.e., $\mathcal{V}=\{V_1,V_2,\ldots,V_n\}$ and $\mathcal{W}$ is the valuations in 1-D, i.e., $\mathcal{W}=\{w_1,w_2,\ldots,w_n\}$. Note that $w_i : 2^M \rightarrow \mathbb{R}$. The transformation $\mathfrak{T}$ reduces $V_i \in \mathbb{R}^{2^{2^M}}$ to $w_i \in \mathbb{R}^{{2^M}}$.

\begin{definition}[Transformation $\mathfrak{T}$]
\label{def:Tu}
Given a resource allocation problem $(N, M, \mathcal{V})$ we obtain the corresponding 1-D valuations denoted by $\mathcal{W}=\mathfrak{T}\left(\mathcal{V}(\cdot)\right)$ as follows, $\forall i\in N$
\begin{equation}
        \label{eq:transformTus}
    w_i(A_i) = \mathfrak{T} (V_i(A_i)) = v_i(A_i) + v'_i(A_{-i}) - v'_i(M)
\end{equation}
When valuations are additive, we obtain $$w_i(A_i) = v_i(A_i) - v'_i(A_i)$$
\end{definition}

\smallskip
\noindent\textit{Example.} Consider two goods $\{g_1,g_2\}$ and an agent with 2-D additive valuations given as: $\{g_1 : (6, -1) $, $g_2 : (5,-100)\}$. We apply $\mathfrak{T}$ and obtain $w_1(\{g_1\}) = 7$ and $w_1(\{g_2\}) = 105$.

For an allocation $A$, the utility obtained by an agent in 2-D is  $u_i(A_i)$ and the corresponding utility in 1-D is $w_i(A_i)$. Note that utility is equal to the valuation in 1-D. 
\begin{lemma}
\label{lemma:winormalized} For goods (chores), under monotonicity (anti-monotonicity) of $\mathcal{V}$,
$\mathcal{W}=\mathfrak{T}\left(\mathcal{V}(\cdot)\right)$ is normalized, monotonic (anti-monotonic), and non-negative (negative).
% The 1-D valuations $\mathcal{W}$ are  normalized, monotonic and non-negative for goods. Accordingly $\mathcal{W}$ are normalized, anti-monotone and non-positive for chores.
\end{lemma}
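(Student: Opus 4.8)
The plan is to observe that the transformation $\mathfrak{T}$ merely shifts the 2-D utility by a quantity that does not depend on the bundle, and then to read off all three claimed properties from this single identity together with the assumed (anti-)monotonicity of $\mathcal{V}$. First I would rewrite Equation~\eqref{eq:transformTus} in terms of the utility $u_i$. In a complete allocation the items not assigned to $i$ are exactly $M \setminus A_i$, so when $w_i$ is viewed as a set function of $S = A_i$ the term $v'_i(A_{-i})$ becomes $v'_i(M \setminus S)$. Substituting $u_i(S) = v_i(S) + v'_i(M \setminus S)$ then yields the key identity
\begin{equation*}
w_i(S) = u_i(S) - v'_i(M),
\end{equation*}
that is, $w_i$ equals $u_i$ minus the constant $v'_i(M)$ (constant because it is independent of $S$).

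From here normalization is immediate: since $v_i(\emptyset)=0$ we have $u_i(\emptyset) = v'_i(M)$, so the identity gives $w_i(\emptyset) = v'_i(M) - v'_i(M) = 0$. For monotonicity I would note that subtracting a constant preserves order, so for any $S \subseteq T$ we have $w_i(T) - w_i(S) = u_i(T) - u_i(S)$. Hence monotonicity of $u_i$ for goods forces $w_i(S) \le w_i(T)$, while anti-monotonicity for chores forces $w_i(S) \ge w_i(T)$; thus $w_i$ inherits the appropriate monotonicity.

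Non-negativity (respectively non-positivity) then follows by combining the previous two facts, taking $S = \emptyset$ as the base case. For goods, monotonicity gives $w_i(T) \ge w_i(\emptyset) = 0$ for every $T \subseteq M$, so $w_i$ is non-negative; for chores, anti-monotonicity gives $w_i(T) \le w_i(\emptyset) = 0$, so $w_i$ is non-positive. I would also remark in passing that the additive special case $w_i(S) = v_i(S) - v'_i(S)$ quoted in Definition~\ref{def:Tu} drops out of the same identity by expanding $v'_i(M \setminus S) - v'_i(M) = -v'_i(S)$.

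The computations are routine; the only step worth stating carefully is the identification of $A_{-i}$ with $M \setminus S$, which is what lets the constant $v'_i(M)$ factor out cleanly. Once the shift identity $w_i(S) = u_i(S) - v'_i(M)$ is established, all three properties are direct consequences of the hypotheses, so I do not expect a genuine obstacle beyond making this reduction explicit.
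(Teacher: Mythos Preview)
Your proposal is correct and follows essentially the same approach as the paper's proof: both hinge on the shift identity $w_i(S)=u_i(S)-v'_i(M)$, from which normalization, (anti-)monotonicity, and the sign condition are read off in the same order. Your write-up is simply a bit more explicit about identifying $A_{-i}$ with $M\setminus S$ and about why subtracting a constant preserves order.
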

\begin{proof}
We assume monotonicity of utility for goods in 2-D, i.e., $\forall i \in N, u_i(\cdot)$ is monotone. Therefore, for an $S\subseteq M, w_i(S) = u_i(S) - v'_i(M)$ is also monotone.
% since it is shifted $u_i$. 
Further, $w_i(\emptyset)=v_i(\emptyset) + v'_i(M)-v'_i(M) = 0$ is normalized. Since $w_i(\cdot)$ is monotone and normalized, it is non negative for goods. Similarly we can prove that $w_i(\cdot)$ is normalized, anti-monotonic and non-negative for chores.
\end{proof}
% \noindent We provide the formal proof in the supplementary.
% \begin{proof}
% As mentioned in Section \ref{sec:prelim}, we assume monotonicity of utility for goods in 2-D, i.e., $\forall i \in N, u_i$ is monotone. Therefore, for an $S\subseteq M, w_i(S) = u_i(S) - v'_i(M)$ is also monotone since it is shifted $u_i$. Further, $w_i(\phi)=v_i(\phi) + v'_i(M)-v'_i(M) = 0$ is normalized. Since $w_i$ is monotone and normalized, it is non negative for goods. Similarly we can prove for chores.
% \end{proof}
 
% We use ${\mathcal{V}}$ to represent 2-D and $\mathcal{W}$ to represent the transformed 1-D valuations, i.e., $u_i^{\mathcal{V}}$ is the utility in 2-D and $u_i^{\mathcal{W}}$ is in 1-D. 
% Given that $u_i^{\mathcal{V}}$ is monotone in Section~\ref{sec:prelim}, $u_i^{\mathcal{W}}$ is normalized (i.e. $u^{\mathcal{W}}_i(\phi)=0$), monotone and non-negative for goods. Accordingly for chores it is anti-monotone and negative. 

\begin{theorem}
\label{thm:transformation}
An Allocation $A$ is $\mathfrak{F}$-Fair and $\mathfrak{E}$-Efficient in $\mathcal{V}$ \emph{iff} $A$ is $\mathfrak{F}$-Fair and $\mathfrak{E}$-Efficient in the transformed 1-D $\mathcal{W}$, $\mathfrak{F} \in \{EF, EF1, EFX, PROP-E, PROP1-E, PROPX-E, MMS\}$ and $\mathfrak{E} \in \{PO, MUW\}$.
\end{theorem}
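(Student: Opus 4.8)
The plan is to reduce the whole statement to one structural identity: the transformed valuation differs from the 2-D utility only by an additive constant that depends on the agent but never on the bundle. Since we allow only complete allocations, $A_{-i} = M \setminus A_i$, so first I would show that for \emph{every} bundle $S \subseteq M$ the map of Definition~\ref{def:Tu} satisfies
\[
w_i(S) = v_i(S) + v'_i(M \setminus S) - v'_i(M) = u_i(S) - v'_i(M).
\]
Writing $c_i := v'_i(M)$, which is independent of $S$, this reads $w_i(S) = u_i(S) - c_i$ for all $S$. This identity is the engine of the proof, and nailing it down uniformly over all bundles is the one place where the complete-allocation convention really enters; everything else is a one-line cancellation. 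The \emph{iff} is then automatic in both directions, because the relation is an exact equality rather than a one-sided bound.

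For the notions comparing two of agent $i$'s bundles — EF, EF1, EFX — every defining inequality has the form $u_i(X) \ge u_i(Y)$, where $X,Y$ are obtained from $A_i, A_j$ by removing at most one item. Substituting $u_i = w_i + c_i$ on both sides cancels $c_i$, so the 2-D inequality holds iff the corresponding 1-D inequality holds; keeping the same $X,Y$ (and the same witness item $k$ for EF1/EFX) discharges both directions at once. For the PROP-E family the right-hand side is $\tfrac{1}{n}\sum_{j\in N} u_i(A_j)$, and since the sum ranges over exactly $n$ agents it shifts by $\tfrac{1}{n}\cdot n\,c_i = c_i$, matching the shift of the left-hand side; hence PROP-E, PROP1-E, and PROPX-E are each preserved verbatim.

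The MMS case needs one extra observation: subtracting the constant $c_i$ from every bundle commutes with the inner $\min_{j}$ and the outer $\max$ over partitions, so the maximin value transforms as $\mu_i^{\mathcal{W}} = \mu_i^{\mathcal{V}} - c_i$ and, crucially, the same partition of $M$ is optimal in both settings. Therefore $u_i(A_i) \ge \mu_i^{\mathcal{V}} \iff w_i(A_i) \ge \mu_i^{\mathcal{W}}$. For efficiency, PO is invariant because the per-agent shift preserves both the weak and the strict domination relations $u_i(A'_i)$ versus $u_i(A_i)$, so the set of Pareto improvements of $A$ is identical under $\mathcal{V}$ and $\mathcal{W}$; and MUW is invariant because $\sum_i w_i(A_i) = \sum_i u_i(A_i) - \sum_i c_i$ differs from the 2-D utilitarian welfare by the allocation-independent constant $\sum_i c_i$, so the two welfare maximizers coincide.

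The hard part will not be any individual case but the base identity itself: I must verify that after adding or removing a single item the complement term is still measured against the full set $M$, so that the constant $c_i = v'_i(M)$ genuinely stays fixed across $A_i$, $A_i \cup \{k\}$, $A_i \setminus \{k\}$, and $A_j$ alike. Once that uniformity is established, each of the seven fairness notions and two efficiency notions collapses to a cancellation of $c_i$, and the equivalence follows in both directions.
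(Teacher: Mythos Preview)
Your proposal is correct and follows essentially the same approach as the paper: both rely on the identity $w_i(S) = u_i(S) - v'_i(M)$ and the fact that this agent-dependent, bundle-independent constant cancels in every defining inequality of the listed fairness and efficiency notions. The only difference is presentational---you isolate the structural identity once and then dispatch all cases uniformly, whereas the paper performs the explicit substitution separately for each notion (EF, EF1, PROP-E, MMS, PO, MUW); the underlying argument is the same.
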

\begin{proof}
We first consider $\mathfrak{F} =$ EF. Let allocation $A$ be EF in $\mathcal{W}$ then,
\begin{equation*}
\begin{aligned}
\forall i,\  \forall j,\  w_i(A_i) &\ge w_i(A_j)
\\
v_i(A_i) + v'_i(A_{-i}) - v'_i(M) &\ge v_i(A_j) + v'_i(A_{-j})  -  v'_i(M)
\\
u_i(A_i)  &\ge u_i(A_j)
\end{aligned}
\end{equation*}
Starting with EF allocation in 2-D, we can prove it is EF in 1-D similarly. The complete proof of Theorem~\ref{thm:transformation} for other notions is provided in Appendix Section~\ref{sec:completeproofoftheorem1}.
\end{proof}

\noindent From Lemma~\ref{lemma:winormalized} and Theorem~\ref{thm:transformation}, we obtain the following.
\begin{corollary}
To determine \{EF, EF1, EFX, MMS\} fairness and \{PO, MUW\} efficiency, we can apply existing algorithms to the transformed $\mathcal{W}=\mathfrak{T}\left(V(\cdot)\right)$ for general valuations. 
\end{corollary}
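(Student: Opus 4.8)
The plan is to reduce the entire theorem to a single algebraic fact: $\mathfrak{T}$ shifts each agent's utility by a constant that does not depend on the allocation. Reading $\mathfrak{T}$ as a function of the evaluated bundle and using that a complete allocation satisfies $A_{-i}=M\setminus A_i$, Definition~\ref{def:Tu} applied to an arbitrary $S\subseteq M$ gives
\[
w_i(S) = v_i(S) + v'_i(M\setminus S) - v'_i(M) = u_i(S) - v'_i(M).
\]
Writing $c_i := v'_i(M)$, which depends only on the agent and the fixed set $M$, the whole proof rests on the identity $w_i(S)=u_i(S)-c_i$ for every $S$. I would establish this first and then dispatch each notion by substituting $w_i(\cdot)=u_i(\cdot)-c_i$ and checking that $c_i$ cancels; the two directions of the ``iff'' are then automatic, since every step is an equivalence.

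For the comparison-based notions EF, EF1, EFX and for PO, each defining inequality compares utilities of the \emph{same} agent $i$, so both sides are shifted by the same $c_i$ and the inequality is invariant; the EF computation already displayed is the template, and EF1/EFX follow verbatim after substitution, while for PO one observes that shifting each agent's utilities by its own constant preserves the Pareto-dominance order. For MUW the objective satisfies $\sum_{i\in N} w_i(A_i) = \sum_{i\in N} u_i(A_i) - \sum_{i\in N} c_i$, so the two objectives differ by an allocation-independent constant and are maximized by the same allocations. For MMS the point is that the share itself shifts: computing $\mu_i$ in $\mathcal{W}$ yields $\mu_i - c_i$ (the maximin is taken over the same $n$-partitions, with every utility lowered by $c_i$), whence $w_i(A_i)\ge \mu_i - c_i \iff u_i(A_i)\ge \mu_i$. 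I would stress that this is special to exact MMS and its additive relaxations and collapses for multiplicative $\alpha$-MMS, since a constant shift does not respect the scaled threshold $\alpha\mu_i$; this is exactly the obstruction formalized in Theorem~\ref{thm:mulmmsdoesntexist}.

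The PROP-E family is the delicate and conceptually central case, and it explains why ordinary PROP had to be replaced. Substituting the identity into PROP-E yields $u_i(A_i) - c_i \ge \frac{1}{n}\sum_{j\in N}\bigl(u_i(A_j)-c_i\bigr)$, and since the average is over exactly $n$ bundles the $n$ copies of $c_i$ contribute $\frac{1}{n}\cdot n\cdot c_i = c_i$, matching the single $c_i$ on the left; the constant cancels and PROP-E transfers. By contrast, ordinary PROP compares against $\frac{1}{n}u_i(M)$ rather than an average over the realized bundles, so no such cancellation occurs, which is precisely why PROP fails in 2-D while PROP-E survives. The identical cancellation handles PROP1-E and PROPX-E once the identity is applied to the augmented bundle $A_i\cup\{k\}$ or the reduced bundle $A_i\setminus\{k\}$.

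The main obstacle I anticipate is bookkeeping rather than arithmetic, and it lives in the ``up to any item'' notions EFX, PROPX-E and PROP1-E, whose defining clauses are quantified \emph{separately} over items $k$ classified as goods ($v_{ik}>0$) and as chores ($v_{ik}<0$). To conclude the equivalence one must check that this classification is preserved by $\mathfrak{T}$, so that the good-clause and chore-clause conditions in $\mathcal{V}$ line up with their counterparts in $\mathcal{W}$. The clean way to see this is that the marginal effect of moving item $k$ into agent $i$'s bundle is $v_{ik}-v'_{ik}$, which is exactly the single-item value $w_{ik}$ in the additive case, so the sign deciding whether $k$ is treated as a good or a chore is unchanged by the transformation; for general valuations one carries the fixed good/chore designation of each item through $\mathfrak{T}$. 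Once this is verified, every per-item inequality transfers by the same constant-shift cancellation used above, and I would relegate these routine substitutions, all of the same form as the displayed EF computation, to the appendix.
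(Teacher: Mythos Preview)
Your argument is correct and matches the paper's approach: the constant-shift identity $w_i(S)=u_i(S)-v'_i(M)$ is exactly what drives the appendix proof of Theorem~\ref{thm:transformation}, and your per-notion cancellations are the same substitutions the paper carries out there. One gap, however, concerns what the Corollary actually asserts. It does not merely claim that the listed properties transfer between $\mathcal{V}$ and $\mathcal{W}$ (that is Theorem~\ref{thm:transformation}); it claims that \emph{existing algorithms} can be run on $\mathcal{W}$. For that you also need $\mathcal{W}$ to satisfy the input assumptions those algorithms rely on---normalized, monotone and non-negative for goods, anti-monotone and non-positive for chores. In the paper this second ingredient is supplied by Lemma~\ref{lemma:winormalized}, and the Corollary is obtained by combining Lemma~\ref{lemma:winormalized} with Theorem~\ref{thm:transformation}. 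Your write-up covers the Theorem~\ref{thm:transformation} half thoroughly but never states or invokes the Lemma~\ref{lemma:winormalized} half; you should add a sentence establishing (or citing) that $\mathcal{W}$ is a valid 1-D instance, since property preservation alone does not license feeding $\mathcal{W}$ to an off-the-shelf algorithm.
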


% Note that it's important that we should be able to apply existing algorithms, 

% A simpler transformation such as $w_i(A_i) = v_i(A_i) + v'_i(A_i)$. However, our transformed valuations are not normalized under this transformation. They can be negative for goods and positive for chores, which violates most of the 1-D existing algorithms must-have assumptions. For example, finding an MNW allocation gives EF1  and PO in 1-D \cite{Caragiannis2016}; if we transform our valuation in this manner, we cannot calculate MNW. So the transformation we want should retain the property and apply existing algorithms. 

Note that applying any algorithm on the 2-D utility values directly without transformation may not work. We state few examples are below. Modified \textit{leximin} algorithm to find PROP1 and PO for chores for 3 or 4 agents given in \cite{chen2020fairness} does not find PROP1-E (or PROP1) and PO in 2-D when applied on utilities. The following example demonstrates the same,
\begin{example}
\emph{Consider 3 agents $\{1,2,3\}$ and 4 chores $\{c_1,c_2,c_3,c_4\}$ with positive externality. The 2-D valuation profile is as follows, $V_{1c_1}=(-30,1)$, $V_{1c_2}=(-20,1)$, $V_{1c_3}=(-30,1)$, $V_{1c_4}=(-30,1)$, $V_{3c_1}=(-1,40)$, $V_{3c_2}=(-1,40)$, $V_{3c_3}=(-1,40)$, and $V_{3c_4}=(-1,40)$. The valuation profile of agent $2$ is the same as that of agent $1$. Allocation $\{\emptyset,\emptyset,(c_1,c_2,c_3,c_4)\}$ is the leximin allocation, which is not PROP1-E. However, allocation $\{c_3,(c_2,c_4),(c_1)\}$ is leximin allocation on transformed valuations; it is PROP1 and PO in $\mathcal{W}$ and it is PROP1-E and PO in $\mathcal{V}$}.
\end{example}
In the same way, for chores, paper \cite{li2021almost} showed that any PROPX allocation ensures 2-MMS for symmetric agents doesn't extend to 2-D. For example, consider two agents $\{1,2\}$  having additive identical valuations for six chores $\{c_1,c_2,c_3,c_4,c_5,c_6\}$, given as $V_{1_{c1}}=(-9,1)$, $V_{1_{c2}}=(-11,1)$, $V_{1_{c3}}=(-12,1)$, $V_{1_{c4}}=(-13,1)$, $V_{1_{c5}}=(-9,1)$, and $V_{1_{c6}}=(-1,38)$. Allocation $A=\{(c_1,c_2,c_3,c_4,c_5),(c_6)\}$ is PROPX-E, but is not 2-MMS in $\mathcal{V}$. While it holds true for $\mathcal{W}$.

Further, adapting certain fairness or efficiency criteria to 2-D is not straightforward. E.g., MNW cannot be defined in 2-D because agents can have positive or negative utilities. Hence, certain results from 1-D like MNW implies EF1 and PO for additive valuations \cite{Caragiannis2016} do not apply for 2-D. The authors proved that MNW allocation gives at least $\frac{2}{1+\sqrt{4n-3}}$-MMS value to each agent in paper\cite{Caragiannis2016}, which doesn't imply for 2-D. Similarly, we show that approximation to MMS, $\alpha$-MMS, does not exist in the presence of externalities (see Section \ref{sec:alphamms}).

% Note that finding MNW gives EF1 and PO \cite{Caragiannis2016} in 1-D; however, we cannot even define MNW in 2-D, given that agents can have positive or negative utilities. In the paper, \cite{chen2020fairness} proved that modified leximin allocation is PROP1 and PO for chores for 3 or 4 agents. However, that doesn't hold true in 2-D; we present the proof for the same in the supplementary. In comparison, we can easily retain these properties and apply these algorithms to transformed valuations. 

\section{Proportionality in 2-D}
\label{sec:propE}
We remark that ensuring PROP (Def. \ref{def:prop}) is too strict in 2-D. As a result, we introduce PROP-E and its additive relaxations in Defs. \ref{def:nprop} and \ref{def:propX1-E} for general valuations. 
\begin{proposition} For additive 2-D,
We can adapt the existing algorithms of PROP and its relaxations to 2-D using $\mathfrak{T}$.
\end{proposition}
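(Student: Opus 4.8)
The plan is to derive this proposition as a short consequence of the machinery already in place, namely Theorem~\ref{thm:transformation} together with Lemma~\ref{lemma:winormalized}, plus one observation: in the transformed additive 1-D instance the externality-aware notions PROP-E, PROP1-E, PROPX-E collapse onto the classical PROP, PROP1, PROPX of Defs.~\ref{def:prop}--\ref{def:propX1-E}. Concretely, I would show the following pipeline works: given an additive 2-D instance $(N,M,\mathcal{V})$, form $\mathcal{W}=\mathfrak{T}(\mathcal{V}(\cdot))$; run an existing 1-D proportionality algorithm on $\mathcal{W}$ to obtain an allocation $A$ that is PROP (resp.\ PROP1, PROPX) in $\mathcal{W}$; and then pull $A$ back, arguing it is PROP-E (resp.\ PROP1-E, PROPX-E) in $\mathcal{V}$.

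The one computation carrying real content is the collapse of PROP-E to PROP in $\mathcal{W}$. Since the utility in 1-D equals the valuation, for additive $w_i$ and a partition $\{A_j\}_{j\in N}$ of $M$ we have
\[
\frac{1}{n}\sum_{j\in N} w_i(A_j)=\frac{1}{n}\,w_i\Big(\bigcup_{j\in N}A_j\Big)=\frac{1}{n}\,w_i(M),
\]
so the PROP-E requirement $w_i(A_i)\ge \tfrac1n\sum_{j} w_i(A_j)$ in Def.~\ref{def:nprop} is exactly the PROP requirement $w_i(A_i)\ge \tfrac1n w_i(M)$ of Def.~\ref{def:prop}. The same additive identity applied to Def.~\ref{def:propX1-E} turns the right-hand thresholds into $\tfrac1n w_i(M)$, so that the ``up to any item'' and ``up to one item'' conditions become precisely PROPX and PROP1 in $\mathcal{W}$. (As a sanity check one can verify the analogue directly in 2-D: summing $u_i(A_j)=v_i(A_j)+v'_i(M\setminus A_j)$ over the partition gives $\sum_{j} u_i(A_j)=v_i(M)+(n-1)v'_i(M)$, and after the subtraction $w_i(\cdot)=v_i(\cdot)-v'_i(\cdot)$ the threshold is again $\tfrac1n w_i(M)$.)

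With this identification, Theorem~\ref{thm:transformation} (which already lists PROP-E, PROP1-E, PROPX-E) supplies the equivalence between the two domains, and Lemma~\ref{lemma:winormalized} guarantees that $\mathcal{W}$ is normalized and monotone for goods, anti-monotone for chores, and sign-definite (non-negative/non-positive), which are exactly the standing assumptions under which the cited proportionality algorithms operate, e.g.\ the (weighted) PROPX procedures of~\cite{li2021almost} and the PROP1/PO construction of~\cite{PROPXAziz2019}. Thus an allocation output by such an algorithm on $\mathcal{W}$ is PROP/PROP1/PROPX in $\mathcal{W}$, hence PROP-E/PROP1-E/PROPX-E in $\mathcal{W}$ by the collapse, hence PROP-E/PROP1-E/PROPX-E in $\mathcal{V}$ by Theorem~\ref{thm:transformation}; since $\mathfrak{T}$ and the pullback are computed in linear time, the adaptation preserves efficiency.

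I expect the main obstacle to be conceptual rather than technical: making precise that plain PROP is \emph{not} the object one can adapt (it is vacuous/too strict in 2-D, as argued in Section~\ref{sec:intro}), and that the correct target is PROP-E, whose relaxations must be matched item-by-item to the 1-D relaxations. The bookkeeping of which item addition/removal in the 2-D condition corresponds to the 1-D one is the only place an error could creep in, but it is already absorbed into Theorem~\ref{thm:transformation}'s treatment of PROP1-E and PROPX-E, so the remaining argument is essentially a corollary. A minor secondary point to confirm is that, for instances mixing goods and chores, the preconditions of the chosen 1-D algorithm still match what Lemma~\ref{lemma:winormalized} delivers on each item type.
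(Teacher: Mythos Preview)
Your proposal is correct and follows essentially the same approach as the paper: observe that in the additive 1-D instance $\mathcal{W}$ the PROP-E threshold $\tfrac{1}{n}\sum_j w_i(A_j)$ collapses to $\tfrac{1}{n}w_i(M)$ (and likewise for the relaxations), then invoke Theorem~\ref{thm:transformation} to transfer PROP-E/PROP1-E/PROPX-E back to $\mathcal{V}$. The paper's own proof is a terse two-line version of exactly this; your additional appeal to Lemma~\ref{lemma:winormalized} to verify the standing assumptions of the 1-D algorithms is a welcome elaboration but not a different route.
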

\begin{proof} In the absence of externalities, for additive valuations, PROP-E is equivalent to PROP as $\forall i, \sum_{j \in N} u_i(A_j) = v_i(M)$. From Theorem~\ref{thm:transformation}, we know that $\mathfrak{T}$ retains PROP-E and its relaxations, and hence all existing algorithms of 1-D is applicable using $\mathfrak{T}$.
\end{proof}

It is known that EF $\implies$ PROP for sub-additive valuation in 1-D. 
As formally presented in Corollary~\ref{corr:1}, in the case of PROP-E also, $\forall i,j \in N$, $u_i(A_i) \ge u_i(A_j) \implies u_i(A_i) \ge \frac{1}{n}\cdot \sum_{j=1}^{n} u_i(A_j)$.
\begin{corollary}\label{corr:1}
EF $\implies$ PROP-E for arbitrary valuations in presence of full externalities.
\end{corollary}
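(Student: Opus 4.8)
The plan is to prove Corollary~\ref{corr:1} directly from the definition of EF in the full-externality setting, since the statement is a one-line implication once the averaging argument is set up. First I would fix an arbitrary agent $i \in N$ and assume the allocation $A$ is EF, meaning $u_i(A_i) \ge u_i(A_j)$ for every $j \in N$. The goal is to derive the PROP-E inequality $u_i(A_i) \ge \frac{1}{n} \sum_{j \in N} u_i(A_j)$.

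The key step is a straightforward averaging: since $u_i(A_i) \ge u_i(A_j)$ holds for all $n$ agents $j$ (including $j=i$, where it is an equality), I would sum these $n$ inequalities to obtain $n \cdot u_i(A_i) \ge \sum_{j \in N} u_i(A_j)$, and then divide through by $n$. This yields exactly the PROP-E condition. I would present this as a short display:
\begin{equation*}
n \cdot u_i(A_i) = \sum_{j \in N} u_i(A_i) \ge \sum_{j \in N} u_i(A_j),
\end{equation*}
followed by division by $n$.

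The crucial observation that makes this work for arbitrary valuations with full externalities is that the averaging argument uses only the comparison $u_i(A_i) \ge u_i(A_j)$ between utilities as real numbers; it never invokes additivity, sub-additivity, or any structural assumption on $V_i$. This is precisely why the statement holds in the general externality setting and not merely in 2-D, and it stands in contrast to the classical $1$-D fact (EF $\implies$ PROP), which genuinely relies on sub-additivity because there one must relate $u_i(M)$ to $\sum_j u_i(A_j)$. Under full externalities the quantity $\frac{1}{n}\sum_j u_i(A_j)$ already replaces $\frac{1}{n} u_i(M)$ in the PROP-E definition, so no such relation is needed.

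I do not expect any genuine obstacle here, as the result is an immediate consequence of the definitions rather than a substantive theorem. The only thing to be careful about is the inclusion of the $j=i$ term in the sum — the inequality $u_i(A_i) \ge u_i(A_i)$ trivially holds as equality, so all $n$ terms are legitimately dominated by $u_i(A_i)$. Hence the averaging is valid and the corollary follows.
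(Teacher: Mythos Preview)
Your proposal is correct and matches the paper's approach exactly: the paper simply records the implication $u_i(A_i) \ge u_i(A_j)$ for all $j$ $\implies$ $u_i(A_i) \ge \frac{1}{n}\sum_{j\in N} u_i(A_j)$, which is precisely the averaging argument you spell out. Your additional remark that no structural assumption (additivity or sub-additivity) is needed is accurate and worth keeping.
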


% \sm{to keep this or not}
% \begin{corollary}
% Given Theorem~\ref{thm:transformation}, we can adapt the existing algorithms for PROP-E and its additive relaxations for additive valuations via $\mathfrak{T}$. 
% \end{corollary}

We now compare PROP-E with existing PROP extensions for capturing externalities. We consider two definitions stated in literature from \cite{externalitiesghodsi} (Average Share) and \cite{aziz2021fairnessexternalities} (General Fair Share). Note that both these definitions are applicable when agents have additive valuations, while PROP-E applies for any general arbitrary valuations. In \cite{aziz2021fairnessexternalities}, the authors proved that Average Share $\implies$ General Fair Share, i.e., if an allocation guarantees all agents their average share value, it also guarantees general fair share value. With that, we state the definition of Average Share (in 2-D) and compare it with PROP-E.
\begin{definition}[Average Share \cite{externalitiesghodsi}]
In $\mathcal{V}$,
the average value of item $k$ for agent $i$, denoted by
\begin{equation}
avg[v_{ik}] = \frac{1}{n} \cdot[v_{ik} + (n-1) v'_{ik}] 
\end{equation}
The average share of agent $i$, 
$\overline{v_i(M)}  = \sum_{k \in M} avg[v_{ik}]$.
An allocation $A$ is said to ensure average share if
$\forall i, u_i(A_i) \ge \overline{v_i(M)}$.
\end{definition}

% \MP{Maybe we can remove general fair share and mention below in one line that, there is a weaker definition to avg fair share in so and so paper which will also be equivalent to PROP-E by prop 1.}
% Since $u^{\mathcal{V}}_i$ is monotone, in case of additive valuations, $\forall i\in N, \forall k \in M, v_{ik} \ge v'_{ik}$, 
% Let $V_{ik}^{max}=\max \{v_{ik},v'_{ik}\}$ and $V_i^{min}=\min \{v_{ik},v'_{ik}\}$ \sm{public decision making}
% \begin{definition}[General Fair Share \cite{aziz2021fairnessexternalities}]
% In $\mathcal{V}$, the general fair share of agent $i$ is denoted by
% \begin{equation}
%     GFS_i = \frac{1}{n} \sum_{k \in M} (V^{max}_{ik} - V^{min}_{ik})
% \end{equation}
% An allocation $A$ is said to be GFS,
% \begin{equation}
%     u_i(A_i) \ge GFS_i + \sum_{k \in M} V^{min}_{ik}
% \end{equation}
% \end{definition}

\begin{proposition}
\label{prop:propeandaverageshare}
PROP-E is equivalent to Average Share in 2-D, for additive valuations. \end{proposition}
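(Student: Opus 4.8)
The plan is to show that the two quantities being compared are algebraically identical for additive valuations, so the equivalence is immediate once both sides are expanded. Concretely, I would start from the right-hand side of PROP-E, namely $\frac{1}{n}\sum_{j\in N} u_i(A_j)$, and compute this sum explicitly under the additive assumption, where $u_i(A_j) = \sum_{k\in A_j} v_{ik} + \sum_{k\notin A_j} v'_{ik}$.

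The key observation is that since $A = (A_1,\ldots,A_n)$ is a complete allocation partitioning $M$, each item $k\in M$ lies in exactly one bundle $A_j$. Therefore, when I sum $u_i(A_j)$ over all $j\in N$, each item $k$ contributes its ``received'' value $v_{ik}$ exactly once (in the single bundle containing it) and its ``not received'' value $v'_{ik}$ exactly $(n-1)$ times (once for each of the other $n-1$ bundles). Hence
\begin{equation*}
\sum_{j\in N} u_i(A_j) = \sum_{k\in M}\bigl[v_{ik} + (n-1)v'_{ik}\bigr].
\end{equation*}
Dividing by $n$ and comparing with the definition of Average Share, I see that $\frac{1}{n}\sum_{j\in N} u_i(A_j) = \sum_{k\in M} \frac{1}{n}\bigl[v_{ik} + (n-1)v'_{ik}\bigr] = \sum_{k\in M} avg[v_{ik}] = \overline{v_i(M)}$.

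With this identity in hand, the proof concludes in one line: the PROP-E condition $u_i(A_i) \ge \frac{1}{n}\sum_{j\in N} u_i(A_j)$ holds for all $i$ if and only if $u_i(A_i) \ge \overline{v_i(M)}$ holds for all $i$, which is exactly the Average Share condition. Since this equivalence holds bundle-by-bundle and agent-by-agent, an allocation satisfies PROP-E precisely when it ensures Average Share.

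I do not anticipate a genuine obstacle here; the statement is essentially a bookkeeping identity that exploits additivity and the fact that the allocation is a partition. The only point requiring mild care is the counting argument — verifying that each $v'_{ik}$ term appears exactly $n-1$ times across the sum rather than $n$ times — which follows directly from the disjointness of the bundles and the completeness of the allocation. One could also remark that the equivalence crucially uses additivity, since for general valuations $\sum_{j} u_i(A_j)$ need not decompose item-wise, which is consistent with the proposition being stated only for the additive case.
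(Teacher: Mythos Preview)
Your proposal is correct and follows essentially the same approach as the paper: both expand $\frac{1}{n}\sum_{j\in N} u_i(A_j)$ using additivity and the fact that $(A_1,\ldots,A_n)$ partitions $M$, then count that each $v_{ik}$ appears once and each $v'_{ik}$ appears $(n-1)$ times to identify the result with $\overline{v_i(M)}$. Your write-up is in fact slightly cleaner, since you observe that the two thresholds are literally equal (so both implications are immediate), whereas the paper phrases it as one direction plus ``reverse implication similarly.''
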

\begin{proof} $\forall i \in N,$ 
\begin{equation*}
\begin{aligned}
u_i(A_i) &\geq \frac{1}{n}\cdot\sum_{j\in N} u_i(A_j) 
= \frac{1}{n}\cdot\sum_{j\in N} v_i(A_j) - v'_i(M \setminus A_j) \\
& = \frac{1}{n}\cdot\sum_{k\in M} v_{ik} -\frac{1}{n}\cdot\sum_{j\in N} v'_i(M \setminus A_j) \\
& = \frac{1}{n}\cdot\sum_{k\in M} v_{ik} -\frac{1}{n}\cdot\sum_{k\in M} (n-1) v'_{ik} 
% \\
% & = \overline{v_i(M)}
\end{aligned}
\end{equation*}
We can prove the reverse implication in a similar way.
\end{proof}
% In the paper ~\cite{aziz2021fairnessexternalities}, the authors proposed general fair share and proved that an allocation satisfies average share implies that it satisfies general fair share for full externalities, so we skip the comparison across general fair share.
% ~\citep{aziz2021fairnessexternalities}
% \begin{proposition} 
% [\cite{aziz2021fairnessexternalities}]
% \label{prop:azizavgshareprop}
% Average Share imples General Fair Share.
% \end{proposition}

% From Proposition~\ref{prop:propeandaverageshare} and \ref{prop:azizavgshareprop}, we state the following colloary.
% \begin{corollary}
% PROP-E implies General Fair Share for additive valuations in 2-D.
% \end{corollary}
Next, we briefly state the relation of EF, PROP-E, and Average Share beyond 2-D and give the proofs in the Appendix Section ~\ref{sec:propeaverageshare}.
\begin{remark}
In case of full externality, EF $\centernot\implies$ Average Share \cite{aziz2021fairnessexternalities}. 
\end{remark}

\begin{proposition}
Beyond 2-D, PROP-E $\centernot\implies$ Average Share and Average Share $\centernot\implies$ PROP-E. 
\end{proposition}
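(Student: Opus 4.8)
The plan is to establish both non-implications by explicit counterexamples in a setting with full (non-2-D) externalities, since that is the regime where the two notions can diverge. Recall from Proposition~\ref{prop:propeandaverageshare} that PROP-E and Average Share coincide in 2-D, so any separating instance must use at least three agents together with item valuations whose off-diagonal components genuinely depend on \emph{which} other agent receives the item. The key structural observation is that PROP-E compares an agent's realized utility $u_i(A_i)$ against the \emph{average over the actual allocation} $\frac{1}{n}\sum_{j\in N}u_i(A_j)$, whereas Average Share compares $u_i(A_i)$ against $\overline{v_i(M)}=\frac{1}{n}\sum_{k\in M}\big(v_{ik}+\sum_{\ell\neq i}v_{ik}^{(\ell)}\big)$, a benchmark that treats every other agent symmetrically. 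With full externalities the realized sum $\sum_j u_i(A_j)$ can be skewed away from this symmetric average depending on which agents hold the externality-heavy items, and that gap is exactly what I would exploit.

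First I would construct an instance, with $n=3$, in which some item $k$ has very asymmetric externality values for agent $i$ (e.g. agent $i$ is happy if agent $2$ holds $k$ but unhappy if agent $3$ holds it). For the direction PROP-E $\centernot\implies$ Average Share, I would engineer an allocation $A$ in which $u_i(A_i)\ge\frac1n\sum_j u_i(A_j)$ holds because the realized configuration places the externality-heavy items with the ``favourable'' agents, keeping the realized average low, yet the symmetric benchmark $\overline{v_i(M)}$ is larger than $u_i(A_i)$ because it averages in the unfavourable placements as well. For the converse, Average Share $\centernot\implies$ PROP-E, I would instead choose valuations and an allocation where the unfavourable placements are the ones actually realized, inflating $\sum_j u_i(A_j)$ so PROP-E fails, while the symmetric average $\overline{v_i(M)}$ stays low enough that Average Share is satisfied. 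In each case I would verify the three-term inequality by direct computation of $u_i(A_i)$, of $\frac{1}{n}\sum_{j\in N}u_i(A_j)$, and of $\overline{v_i(M)}$ for the single critical agent $i$; the other agents can be given trivial valuations so their constraints are slack.

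The main obstacle I anticipate is calibrating the numbers so that exactly one of the two inequalities holds strictly in each instance while the notions remain well-defined under monotonicity — in particular, making $\frac1n\sum_j u_i(A_j)$ and $\overline{v_i(M)}$ straddle $u_i(A_i)$ in opposite senses for the two directions. Since $\sum_j u_i(A_j)$ depends on the realized allocation but $\overline{v_i(M)}$ does not, the arithmetic is mechanical once the asymmetry is placed correctly; the care lies in ensuring the adversarial item sits with the ``right'' neighbour in $A$. I would present the two separating instances as a pair of short tables of item-by-agent externality vectors, state the distinguished allocation $A$ for each, and check the two relevant inequalities, deferring the routine evaluation of the remaining agents' constraints. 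This is exactly the level of detail the Appendix reference (Section~\ref{sec:propeaverageshare}) is meant to carry.
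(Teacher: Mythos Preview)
Your plan is sound and, in fact, more rigorous than what the paper does. The paper's appendix does not construct separating instances at all: it simply fixes $n=3$, expands the right-hand side of PROP-E (using the swap interpretation $u_1(A_j)=v_{11}(A_j)+v_{1j}(A_1)+\sum_{\ell\neq 1,j}v_{1\ell}(A_\ell)$) and the right-hand side of Average Share, observes that the two symbolic expressions differ in the externality cross-terms, and declares ``we conclude there is no relation''. No concrete valuations or allocation are exhibited; the argument is essentially ``the two thresholds are given by different formulas, hence neither implies the other''. Your approach of building explicit $3$-agent instances with an item whose externality to agent~$i$ depends on which of agents~$2$ or~$3$ holds it, and then choosing the allocation so that $\frac{1}{n}\sum_j u_i(A_j)$ and $\overline{v_i(M)}$ straddle $u_i(A_i)$ in opposite orders, is exactly what a complete proof of a non-implication requires.

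One point to be careful about when you carry this out: make sure you use the paper's interpretation of $u_i(A_j)$ in full externalities (the pairwise swap of bundles between $i$ and $j$, keeping the remaining agents' bundles fixed), since that is what determines the PROP-E benchmark here. With that convention, the difference between the two benchmarks for agent~$1$ reduces to $\tfrac{1}{3}\big[(v_{12}(A_2)-v_{12}(A_3))-(v_{13}(A_2)-v_{13}(A_3))\big]$, which you can set to any sign by a single asymmetric item, so your calibration step is indeed mechanical.
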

To conclude this section, we state that for the special case of 2-D externalities with additive valuations, we can adapt existing algorithms to 2-D, and further analysis is required for the general setting.
% there is no relation between PROP-E and average share. Hence no relation between PROP-E and general fair share\footnote{We provide further details in supplementary.}. Beyond 2-D, EF doesn't compare with average share~\cite{aziz2021fairnessexternalities} but EF is stronger than PROP-E. For the special case of 2-D externalities and additive valuations, existing algorithms can be used for proportionality.
% Although further analysis is required for general valuations and general externalities. 

Apart from the additive relaxations, the most commonly considered relaxation to PROP is maximin share (MMS) allocations. We provide analysis of MMS for 2-D valuations in the next section.

\section{Approximate MMS in 2-D}
\label{sec:alphamms}
From Theorem~\ref{thm:transformation}, we showed that transformation $\mathfrak{T}$ retains MMS property, i.e., an allocation $A$ guarantees MMS in 1-D \emph{iff} $A$ guarantees MMS in 2-D. We draw attention to the point that, 
\begin{equation}
        \label{eq:mu_itransform}
\mu_i  = \mu^{\mathcal{W}}_i + v'_i(M)
\end{equation}
where $\mu^{\mathcal{W}}_i$ and $\mu_i$ are the MMS value of agent $i$ in 1-D and 2-D, respectively. \cite{procacciammsdoesntexistexample} proved that MMS allocation may not exist even for additive valuations, but multiplicative approximation of MMS always exists in 1-D. % \cite{procacciammsdoesntexistexample,garg2018,ghodsi2017mms,azizmmschores,GargMMS2020}. 
The current best approximation results on MMS allocation are $3/4+1/(12n)$-MMS for goods \cite{GargMMS2020} and $11/9$-MMS for chores \cite{Xinhuangmmschores} for additive valuations. We are interested in finding multiplicative approximation to MMS in 2-D. Note that we only study $\alpha$-MMS for complete goods or chores in 2-D, as paper \cite{kulkarni2021mixedmanna} proves the non-existence of $\alpha$-MMS in the case of combination of goods and chores in 1-D. From Eq.~\ref{def:mms} of  $\alpha$-MMS, if $\mu_i$ is positive, we consider $\alpha$-MMS allocation with $\alpha \in [0,1]$, and if it is negative, $1/\alpha$-MMS with $\alpha \in (0,1]$.

% Adapting $\alpha$-MMS in 2-D is challenging since even if all items are goods, agents might have positive or negative utilities depending on externalities.  Irrespective of goods or chores, we follow the  

We categorize externalities in two ways for better analysis 1) Correlated Externality 2) Inverse Externality. In the correlated setting, we study goods with positive externality and chores with negative externality. And in the inverse externality, we study goods with negative externality and chores with positive externality. In the following sections, we show that $\alpha$-MMS exists for correlated, while it may not exist for inverse externality. 
\subsection{$\alpha$-MMS for Correlated Externality}
In this section, we investigate approximate MMS guarantees for correlated externality.

\begin{proposition} For correlated externality, if an allocation $A$ is $\alpha$-MMS in $\mathcal{W}$, $A$ is $\alpha$-MMS in $\mathcal{V}$, but need not vice versa.
\label{prop:analysisforalphamms}
\end{proposition}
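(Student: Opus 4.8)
The plan is to exploit two structural identities linking the 2-D and 1-D domains and then split on the sign of the externality so as to match it against the two sign conventions in the $\alpha$-MMS definition (Def.~\ref{def:mms}). First I would record the additive utility identity: from the transformation (Eq.~\ref{eq:transformTus}) with additivity we have $w_i(A_i) = v_i(A_i) - v'_i(A_i)$, while $u_i(A_i) = v_i(A_i) + v'_i(M) - v'_i(A_i)$, so $u_i(A_i) = w_i(A_i) + v'_i(M)$ for every bundle. Combined with $\mu_i = \mu^{\mathcal{W}}_i + v'_i(M)$ (Eq.~\ref{eq:mu_itransform}), both the realized utility and the MMS threshold differ between the two domains by the same constant $v'_i(M)$. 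This reduces the entire proposition to tracking how that common shift interacts with the multiplicative factor $\alpha$.

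Next I would split into the two correlated regimes and pin down the signs. For goods with positive externality, Lemma~\ref{lemma:winormalized} gives $\mu^{\mathcal{W}}_i \ge 0$, and $v'_i(M) \ge 0$, hence $\mu_i \ge 0$; for chores with negative externality, $\mu^{\mathcal{W}}_i \le 0$ and $v'_i(M) \le 0$, hence $\mu_i \le 0$. Thus the same clause of Def.~\ref{def:mms} applies simultaneously in $\mathcal{W}$ and in $\mathcal{V}$, which is what lets the argument proceed without crossing between the $\alpha$ and $1/\alpha$ conventions.

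For the forward implication in the goods case, assuming $w_i(A_i) \ge \alpha\,\mu^{\mathcal{W}}_i$ I would add $v'_i(M)$ to both sides and substitute the identities to obtain $u_i(A_i) \ge \alpha\,\mu_i + (1-\alpha)v'_i(M)$; since $\alpha \in [0,1]$ and $v'_i(M) \ge 0$, the residual $(1-\alpha)v'_i(M)$ is non-negative, yielding $u_i(A_i) \ge \alpha\,\mu_i$. For chores, starting from $w_i(A_i) \ge \tfrac{1}{\alpha}\mu^{\mathcal{W}}_i$ the analogous manipulation leaves the residual $(1-\tfrac{1}{\alpha})v'_i(M)$, which is again non-negative because $\tfrac{1}{\alpha} \ge 1$ and $v'_i(M) \le 0$. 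In either regime the 2-D guarantee follows.

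Finally, for the ``not vice versa'' direction the same computation gives $u_i(A_i) - \alpha\,\mu_i = \left(w_i(A_i) - \alpha\,\mu^{\mathcal{W}}_i\right) + (1-\alpha)v'_i(M)$, so a strictly positive externality term can absorb a 1-D violation; I would therefore exhibit a small instance (a couple of agents and a handful of goods with positive $v'$) in which some allocation clears the 2-D threshold purely on account of $v'_i(M)$ yet falls short of $\alpha\,\mu^{\mathcal{W}}_i$ in the transformed instance. The main obstacle I anticipate is the bookkeeping of the two sign conventions in Def.~\ref{def:mms}, namely confirming that the signs of $\mu_i$ and $\mu^{\mathcal{W}}_i$ never disagree, since otherwise one would be comparing an $\alpha$-bound against a $1/\alpha$-bound; the sign alignment supplied by Lemma~\ref{lemma:winormalized} in the correlated regime is exactly what rules this out, and it is the reason the statement is specific to correlated externality.
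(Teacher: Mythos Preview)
Your proposal is correct and follows essentially the same route as the paper: both use the shift identities $u_i(A_i)=w_i(A_i)+v'_i(M)$ and $\mu_i=\mu_i^{\mathcal W}+v'_i(M)$, then observe that the residual $(1-\alpha)v'_i(M)$ (goods) or $(1-\tfrac{1}{\alpha})v'_i(M)$ (chores) is non-negative in the correlated regime, and finish the converse direction with a small counter-example. The only substantive difference is that the paper supplies explicit two-agent instances for both goods and chores, whereas you leave the construction as a plan; your additional care in verifying that $\mu_i$ and $\mu_i^{\mathcal W}$ share a sign (so the same clause of Def.~\ref{def:mms} applies in both domains) is in fact cleaner than the paper's treatment of that point.
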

\begin{proof}
% \begin{itemize}
In the first part of this proof, we prove that $A$ is $\alpha$-MMS in $\mathcal{W}$, $A$ is $\alpha$-MMS in $\mathcal{V}$, and then in the second part, we provide a counter-example such that $A$ is $\alpha$-MMS in $\mathcal{V}$ but not in $\mathcal{W}$.

\noindent \textbf{Part-1.} Let $A$ be $\alpha$-MMS in $\mathcal{W}$,
\begin{equation*}
\begin{aligned}
\forall i \in N, 
% u^{\mathcal{W}}_i(A_i) &\ge \alpha \cdot \mu_i^{\mathcal{W}}
% \\
w_i(A_i) &\ge \alpha \mu_i^\mathcal{W} & \textit{for goods} \\
u_i(A_i) - v'_i(M)  &\ge  -\alpha v'_i(M) + \alpha \mu_i \\
\forall i \in N, 
w_i(A_i) &\ge \frac{1}{\alpha} \mu_i^\mathcal{W} & \textit{for chores} \\
u_i(A_i) - v'_i(M)  &\ge  -\frac{1}{\alpha} v'_i(M) + \frac{1}{\alpha}  \mu_i 
\end{aligned}
\end{equation*}

In the case of goods with positive externalities, $\mu_i$ is positive, $\alpha \in (0,1]$, and $\forall S \subseteq M$, $v'(S) \ge 0$. From this, we derive that $v_i'(M) \le \alpha v_i'(M)$, and hence it is valid to say that $u_i(A_i) \geq \alpha \mu_i$. In the case of chores with negative externalities, $\mu_ i$ is negative, $1/\alpha \geq 1$, and $\forall S \subseteq M, v'(S) \le 0$.  Similarly to the previous point, we derive that $v'(M) \le \frac{1}{\alpha} v'(M)$ and thus $u_i(A_i) \geq \frac{1}{\alpha}\mu_i$.

\noindent \textbf{Part-2.} 
We provide the following counter-example for goods to prove $A$ is $\alpha$-MMS in $\mathcal{V}$ but not in $\mathcal{W}$. 

\smallskip
% \begin{example}
% \label{example:alphaMMSin2D}
\noindent \emph{Example.} Consider $N=\{1,2\}$ both having additive identical valuations for $5$ goods $\{g_1, g_2, g_3, g_4, g_5, g_6\}$ given by, $V_{ig_1}=(0.5,0.1)$, $V_{ig_2}=(0.5,0.1)$, $V_{ig_3}=(0.3,0.1)$,  $V_{ig_4}=(0.5,0.1)$, 
$V_{ig_5}=(0.5,0.1)$, and $V_{ig_6}=(0.5,0.1)$. After transformation, we get $\mu_i^{\mathcal{W}}=1$ and in 2-D $\mu_i=1.6$. Allocation, $A=\{\{g_1\},$ $\{g_2,g_3,g_4,g_5,g_6\}\}$ is $1/2$-MMS in $\mathcal{V}$, but not in $\mathcal{W}$.
% \end{example}
We provide the following counter-example for chores with negative externality to prove $A$ is $1/\alpha$-MMS in $\mathcal{V}$ but not in $\mathcal{W}$. 
\begin{example}
\label{example:choreswithneg}
Consider $N=\{1,2\}$ both having additive identical valuations for 3 chores $M=\{c_1,c_2,c_3\}$ given by $V_{1c_1}=(-40,-36)$, $V_{1c_2}=(-110,-70)$, and $V_{1c_3}=(-109,-71)$. Note that $v'_i(M)=-177$. After transformation, we get $\mu_i^{\mathcal{W}}=-42$ and $\mu_i^{\mathcal{V}}=-219$. Let us consider $1/\alpha=4/3$, then $w_i(A_i)\ge -56$ and $u_i(A_i)\ge -292$. Allocation $A=\{(c_1,c_2,c_3),\emptyset\}$ is 4/3-MMS in $\mathcal{V}$, but not in $\mathcal{W}$. 
\end{example}
\end{proof}

% Proposition \ref{prop:analysisforalphamms} implies the following Corollary.
\begin{corollary}
 We can adapt the existing $\alpha$-MMS algorithms using $\mathfrak{T}$ for correlated externality for \emph{general valuations}.
\end{corollary}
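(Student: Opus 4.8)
The plan is to chain the structural guarantee of Lemma~\ref{lemma:winormalized} with the one-directional $\alpha$-MMS preservation of Proposition~\ref{prop:analysisforalphamms}. Given a 2-D instance $(N,M,\mathcal{V})$ with correlated externality, I would first apply $\mathfrak{T}$ to obtain the 1-D profile $\mathcal{W}=\mathfrak{T}(\mathcal{V}(\cdot))$. The key observation is that any existing $\alpha$-MMS procedure for 1-D (e.g.\ the goods algorithm of \cite{GargMMS2020} or the chores algorithm of \cite{Xinhuangmmschores}, and more generally any routine designed for monotone valuations) relies on the standing assumptions that valuations be normalized together with monotone and non-negative for goods, or anti-monotone and non-positive for chores. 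Lemma~\ref{lemma:winormalized} guarantees that $\mathcal{W}$ meets exactly these assumptions, so such an algorithm applies to $\mathcal{W}$ without modification.

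Second, I would run the chosen algorithm on $\mathcal{W}$ to obtain an allocation $A$ that is $\alpha$-MMS with respect to $\mathcal{W}$, that is $w_i(A_i) \ge \alpha \mu_i^{\mathcal{W}}$ for goods (resp.\ $w_i(A_i) \ge \tfrac{1}{\alpha}\mu_i^{\mathcal{W}}$ for chores). The final step invokes Part-1 of Proposition~\ref{prop:analysisforalphamms}, which states precisely that under correlated externality an allocation that is $\alpha$-MMS in $\mathcal{W}$ is $\alpha$-MMS in $\mathcal{V}$. Composing the three steps produces, in the same running time as the 1-D algorithm, an allocation certified $\alpha$-MMS in the original 2-D instance, which is what the corollary asserts.

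The subtle point, and the step I expect to carry the real weight, is that the implication supplied by Proposition~\ref{prop:analysisforalphamms} runs only from $\mathcal{W}$ to $\mathcal{V}$ and not conversely. Fortunately this is exactly the direction we can use: the algorithm certifies the guarantee in the transformed instance, while we seek the guarantee in the original one. The reason the implication holds without any loss in the approximation factor is the sign alignment between the externality offset $v_i'(M)$ and the MMS value recorded in Eq.~\ref{eq:mu_itransform}, namely $\mu_i = \mu_i^{\mathcal{W}} + v_i'(M)$: for goods both $v_i'(M)$ and $\mu_i$ are non-negative and $\alpha \in (0,1]$, so the resulting slack $(1-\alpha)\,v_i'(M)$ is non-negative and only strengthens the 2-D bound, and the chores case is symmetric with $v_i'(M) \le 0$ and $1/\alpha \ge 1$. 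Hence no degradation of $\alpha$ occurs, and the corollary follows directly.
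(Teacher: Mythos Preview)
Your proposal is correct and follows exactly the route the paper intends: the corollary is stated immediately after Proposition~\ref{prop:analysisforalphamms} with no separate proof, because it is meant to be read as the direct consequence of combining Lemma~\ref{lemma:winormalized} (so that existing 1-D algorithms are applicable to $\mathcal{W}$) with Part-1 of Proposition~\ref{prop:analysisforalphamms} (so that the $\alpha$-MMS guarantee transfers from $\mathcal{W}$ to $\mathcal{V}$). Your write-up simply makes this chaining explicit, including the sign-alignment observation that underlies Part-1, and there is nothing to add or correct.
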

\begin{corollary}
For correlated 2-D externality, we can always obtain $3/4+1/(12n)$-MMS for goods and $11/9$-MMS for chores for \emph{additive}.
\end{corollary}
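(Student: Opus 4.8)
The plan is to treat this statement as a direct specialization of the preceding corollary (which already establishes that the existing $\alpha$-MMS algorithms can be adapted via $\mathfrak{T}$ for correlated externality under general valuations) to the additive case. The only genuinely new ingredient needed is the observation that $\mathfrak{T}$ keeps us inside the additive class, so that the best-known \emph{additive} algorithms, with their specific constants, become applicable. First I would note from Definition~\ref{def:Tu} that when $\mathcal{V}$ is additive, $w_i(A_i) = v_i(A_i) - v'_i(A_i) = \sum_{k \in A_i}(v_{ik} - v'_{ik})$, so each $w_i$ is itself additive with per-item value $v_{ik} - v'_{ik}$. Hence $\mathcal{W} = \mathfrak{T}(\mathcal{V})$ is an additive 1-D instance.

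Next I would invoke Lemma~\ref{lemma:winormalized}. Under correlated externality (goods with positive externality, chores with negative externality), $\mathcal{W}$ is normalized, monotone, and non-negative in the goods case, and normalized, anti-monotone, and negative in the chores case. Combined with the additivity established above, $\mathcal{W}$ meets exactly the input assumptions required by the state-of-the-art additive MMS algorithms. I would therefore run the algorithm of \cite{GargMMS2020} on $\mathcal{W}$ in the goods case to obtain an allocation $A$ that is $(3/4 + 1/(12n))$-MMS in $\mathcal{W}$, and the algorithm of \cite{Xinhuangmmschores} in the chores case to obtain an $A$ that is $11/9$-MMS in $\mathcal{W}$.

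Finally I would transfer the guarantee back to $\mathcal{V}$ using Proposition~\ref{prop:analysisforalphamms} (Part 1): for correlated externality, $\alpha$-MMS in $\mathcal{W}$ implies $\alpha$-MMS in $\mathcal{V}$, both in the goods case (with $\alpha \in (0,1]$ and $\mu_i \ge 0$) and in the chores case (with $1/\alpha \ge 1$ and $\mu_i \le 0$), relying on $v'_i(M) \le \alpha\, v'_i(M)$ and $v'_i(M) \le \frac{1}{\alpha} v'_i(M)$ respectively. This yields the claimed $(3/4 + 1/(12n))$-MMS allocation for goods and $11/9$-MMS allocation for chores in $\mathcal{V}$.

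The point requiring care here is not a deep obstacle but a structural check: one must confirm that $\mathfrak{T}$ keeps the instance inside the precise class on which these particular algorithms are proven correct, namely that additivity is preserved (the first step) and that the sign, monotonicity, and normalization conventions line up (Lemma~\ref{lemma:winormalized}). Once these are in place, the approximation ratio carries over \emph{verbatim}, since Proposition~\ref{prop:analysisforalphamms} only shifts the bound in the favorable direction (via the $v'_i(M)$ term) and never degrades the constant.
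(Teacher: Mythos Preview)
Your proposal is correct and follows exactly the approach the paper intends: the paper states this corollary without a separate proof, treating it as an immediate consequence of Proposition~\ref{prop:analysisforalphamms} together with the cited additive-valuation algorithms, and your write-up simply makes the implicit steps explicit. The one ingredient you add beyond what the paper spells out---checking that $\mathfrak{T}$ preserves additivity so that the algorithms of \cite{GargMMS2020} and \cite{Xinhuangmmschores} actually apply to $\mathcal{W}$---is both necessary and straightforward, and fits cleanly with Lemma~\ref{lemma:winormalized}.
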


\subsection{$\alpha$-MMS for Inverse Externality}
Motivated by the example given in \cite{procacciammsdoesntexistexample} for non-existence of MMS allocation for 1-D valuations, we ingeniously adapted it to construct the following instance in 2-D to prove the impossibility of $\alpha$-MMS in 2-D. 
We show that for any $\alpha \in (0,1]$, an $\alpha$-MMS or $1/\alpha$-MMS allocation may not exist for inverse externality. In this section, we present an instance for goods with inverse externality, such that $\forall i, \mu_i$ is positive, we show that for $\alpha >0$, there is no $\alpha$-MMS allocation. Further, we present an instance for chores with positive externality, such that $\forall i, \mu_i$ is negative, and we prove that there is no $1/\alpha$ MMS allocation. In order to prove the non-existence results for goods we consider the 1-D valuations $W$ where MMS does not exist. Further we use this example to construct $V^g$ where $W = \mathfrak{T}(V^g)$, in 2-D such that $\alpha$-MMS exists in $V^{g}$ only if MMS allocation exists in $W$. Hence the contradiction.

\smallskip
\noindent\emph{\underline{Non-existence of $\alpha$-MMS in Goods.}}
Consider the following example.
\begin{example}
\label{example:goodswithpos}
\emph{We consider a problem of allocating 12 goods among three agents, and represent valuation profile as $V^{g}$. The valuation profile $V^{g}$ is equivalent to $10^3 \times V$ given in Table~\ref{tab:additive2Dvaluationprofile}. We set $\epsilon_1 = 10^{-4}$ and $\epsilon_2=10^{-3}$. We transform these valuations in 1-D using $\mathfrak{T}$, and the valuation profile $\mathfrak{T}(V^g)$ is the same as the instance in \cite{procacciammsdoesntexistexample} that proves the non-existence of MMS for goods. Note that $\forall i, v_i'(M)=-4055000+10^3\epsilon_1$ 
The MMS value of every agent in $\mathfrak{T}(V^{g})$ is 4055000 and from Eq.~\ref{eq:mu_itransform}, the MMS value of every agent in $V^{g}$ is $10^3\epsilon_1$.}
\end{example}
% We can write $\forall i, \mu_i^{\mathcal{W}}=4055000$, $\mu_i^{\mathcal{V}}=\epsilon_1$ and $v_i'(M)=-4055000+10^3\epsilon_1$.
% Agents divide into three equally valued bundles to calculate their MMS share, i.e., agent 1 divides items into $ \{(k_2,k_6,k_7,k_{11}),$ $(k_1,k_4,k_9,k_{10}),$ $(k_3,k_5,k_8,k_{12})\}$. Similarly agent2 and agent3 divide items into $\{(k_2,k_6,k_8,k_{10}),$ $(k_4,k_5,k_7,k_{12}),$ $(k_1,k_3,k_9,k_{11})\}$ and $ \{(k_3,k_6,k_7,k_9),$ $(k_1,k_2,k_{10},k_{12}),$ $(k_4,k_5,k_8,k_{11})\}$ respectively. 

% $\forall i, \mu^{\mathcal{W}}_i=950$, $v_i'(M)=949.9$, and from Eq.~\ref{eq:mu_itransform}, $\mu_i=950-949.9=0.1$. 
% From \cite{procaccia2014mms}An MMS allocation doesn't exist for $\mathfrak{T}(V^{g})$,
Recall that $\mathfrak{T}$ retains MMS property (Theorem ~\ref{thm:transformation}) and thus we can say that MMS allocation doesn't exist in $V^g$.
\begin{table}[!t]
\caption{Additive 2-D Valuation Profile ($V$)}
% \begin{small}
\begin{center}
% \centering
% \resizebox{\columnwidth}{!}{%
% \renewcommand*{\arraystretch}{1.2}
\begin{tabular}{llll} 
\toprule

\multicolumn{1}{c}{\multirow{2}{*}{Item}}
% \multirow{2}{*}{item}
& \multicolumn{1}{c}{Agent $1$} & \multicolumn{1}{c}{Agent $2$}  & \multicolumn{1}{c}{Agent $3$}  \\ 
 & \multicolumn{1}{c}{$(v_1,v'_1)$} & \multicolumn{1}{c}{$(v_2,v'_2)$} & \multicolumn{1}{c}{$(v_3,v'_3)$} \\ 
\midrule
 \arrayrulecolor{lightgray}
% $k_1$   & 3$\epsilon_2$, -1017+3$\epsilon_2$-3$\epsilon_2$ & 3$\epsilon_2$, -1017+3$\epsilon_2$-3$\epsilon_2$ & 3$\epsilon_2$, -1017+3$\epsilon_2$-3$\epsilon_2$ \\
\multirow{2}{*}{$k_1$}& 
\multicolumn{1}{c}{\multirow{2}{*}{\begin{tabular}[c]{@{}c@{}}(3$\epsilon_2$, -1017+3$\epsilon_1$-3$\epsilon_2$)\end{tabular}}}
&
\multicolumn{1}{c}{\multirow{2}{*}{\begin{tabular}[c]{@{}c@{}}(3$\epsilon_2$, -1017+3$\epsilon_1$-3$\epsilon_2$)\end{tabular}}}
& \multicolumn{1}{c}{\multirow{2}{*}{\begin{tabular}[c]{@{}c@{}}(3$\epsilon_2$, -1017+3$\epsilon_1$-3$\epsilon_2$)\end{tabular}}} \\ \\
\hline
% $k_2$   & 0.2-$2\epsilon_1$,-348.8-$2\epsilon_1$-$\epsilon_2$   & 0.2-$2\epsilon_1$,-348.8-$2\epsilon_1$-$\epsilon_2$ & 349.9+$\epsilon_1$,0.1+$\epsilon_1$    \\ 

\multirow{2}{*}{$k_2$}&\multicolumn{1}{c}{\multirow{2}{*}{\begin{tabular}[c]{@{}c@{}}$(2\epsilon_1$, -1025+$2\epsilon_1$+$\epsilon_2$)\end{tabular}}}
&\multicolumn{1}{c}{\multirow{2}{*}{\begin{tabular}[c]{@{}c@{}}$(2\epsilon_1$, -1025+$2\epsilon_1$+$\epsilon_2$)\end{tabular}}}
& \multicolumn{1}{c}{\multirow{2}{*}{\begin{tabular}[c]{@{}c@{}}$(1025-\epsilon_1$, -$\epsilon_1)$\end{tabular}}}\\ \\
\hline
\multirow{2}{*}{$k_3$}&\multicolumn{1}{c}{\multirow{2}{*}{\begin{tabular}[c]{@{}c@{}}$(2\epsilon_1$, -1012+$2\epsilon_1$+$\epsilon_2$)\end{tabular}}}
&\multicolumn{1}{c}{\multirow{2}{*}{\begin{tabular}[c]{@{}c@{}}($1012-\epsilon_1$, -$\epsilon_1$)\end{tabular}}}
& \multicolumn{1}{c}{\multirow{2}{*}{\begin{tabular}[c]{@{}c@{}} ($2\epsilon_1$, -1012+$2\epsilon_1$+$\epsilon_2$)\end{tabular}}}\\ \\
\hline
\multirow{2}{*}{$k_4$}&\multicolumn{1}{c}{\multirow{2}{*}{\begin{tabular}[c]{@{}c@{}}($2\epsilon_1$, -1001+$2\epsilon_1$+$\epsilon_2$)\end{tabular}}}
&\multicolumn{1}{c}{\multirow{2}{*}{\begin{tabular}[c]{@{}c@{}}($1001-\epsilon_1$, -$\epsilon_1$)\end{tabular}}}
& \multicolumn{1}{c}{\multirow{2}{*}{\begin{tabular}[c]{@{}c@{}}($1001-\epsilon_1$, -$\epsilon_1$)\end{tabular}}}\\ \\
\hline
\multirow{2}{*}{$k_5$}&\multicolumn{1}{c}{\multirow{2}{*}{\begin{tabular}[c]{@{}c@{}}$(1002-\epsilon_1$, -$\epsilon_1$)\end{tabular}}}
&\multicolumn{1}{c}{\multirow{2}{*}{\begin{tabular}[c]{@{}c@{}}($2\epsilon_1$, -1002+$2\epsilon_1$+$\epsilon_2$)\end{tabular}}}
& \multicolumn{1}{c}{\multirow{2}{*}{\begin{tabular}[c]{@{}c@{}} ($1002-\epsilon_1$, -$\epsilon_1$)\end{tabular}}}\\ \\
\hline
\multirow{2}{*}{$k_6$}&\multicolumn{1}{c}{\multirow{2}{*}{\begin{tabular}[c]{@{}c@{}}($1022-\epsilon_1$, -$\epsilon_1$)\end{tabular}}}
&\multicolumn{1}{c}{\multirow{2}{*}{\begin{tabular}[c]{@{}c@{}}($1022-\epsilon_1$, -$\epsilon_1$)\end{tabular}}}
& \multicolumn{1}{c}{\multirow{2}{*}{\begin{tabular}[c]{@{}c@{}} ($1022-\epsilon_1$, -$\epsilon_1$)\end{tabular}}}\\ \\
\hline
\multirow{2}{*}{$k_7$}&\multicolumn{1}{c}{\multirow{2}{*}{\begin{tabular}[c]{@{}c@{}}($1003-\epsilon_1$, -$\epsilon_1$)\end{tabular}}}
&\multicolumn{1}{c}{\multirow{2}{*}{\begin{tabular}[c]{@{}c@{}}($1003-\epsilon_1$, -$\epsilon_1$)\end{tabular}}}
& \multicolumn{1}{c}{\multirow{2}{*}{\begin{tabular}[c]{@{}c@{}} ($2\epsilon_1$, -1003+$2\epsilon_1$+$\epsilon_2$)\end{tabular}}}\\ \\
\hline
\multirow{2}{*}{$k_8$}&\multicolumn{1}{c}{\multirow{2}{*}{\begin{tabular}[c]{@{}c@{}}($1028-\epsilon_1$, -$\epsilon_1$)\end{tabular}}}
&\multicolumn{1}{c}{\multirow{2}{*}{\begin{tabular}[c]{@{}c@{}}($1028-\epsilon_1$, -$\epsilon_1$)\end{tabular}}}
& \multicolumn{1}{c}{\multirow{2}{*}{\begin{tabular}[c]{@{}c@{}} ($1028-\epsilon_1$, -$\epsilon_1$)\end{tabular}}}\\ \\
\hline
\multirow{2}{*}{$k_9$}&\multicolumn{1}{c}{\multirow{2}{*}{\begin{tabular}[c]{@{}c@{}}($1011-\epsilon_1$, -$\epsilon_1$)\end{tabular}}}
&\multicolumn{1}{c}{\multirow{2}{*}{\begin{tabular}[c]{@{}c@{}}($2\epsilon_1$, -1011+$2\epsilon_1$+$\epsilon_2$)\end{tabular}}}
& \multicolumn{1}{c}{\multirow{2}{*}{\begin{tabular}[c]{@{}c@{}} ($1011-\epsilon_1$, -$\epsilon_1$)\end{tabular}}}\\ \\
\hline
\multirow{2}{*}{$k_{10}$}&\multicolumn{1}{c}{\multirow{2}{*}{\begin{tabular}[c]{@{}c@{}}($1000-\epsilon_1$, -$\epsilon_1$)\end{tabular}}}
&\multicolumn{1}{c}{\multirow{2}{*}{\begin{tabular}[c]{@{}c@{}}($1000-\epsilon_1$, -$\epsilon_1$)\end{tabular}}}
& \multicolumn{1}{c}{\multirow{2}{*}{\begin{tabular}[c]{@{}c@{}} ($1000-\epsilon_1$, -$\epsilon_1$)\end{tabular}}}\\ \\
\hline
\multirow{2}{*}{$k_{11}$}&\multicolumn{1}{c}{\multirow{2}{*}{\begin{tabular}[c]{@{}c@{}}($1021-\epsilon_1$, -$\epsilon_1$)\end{tabular}}}
&\multicolumn{1}{c}{\multirow{2}{*}{\begin{tabular}[c]{@{}c@{}}($1021-\epsilon_1$, -$\epsilon_1$)\end{tabular}}}
& \multicolumn{1}{c}{\multirow{2}{*}{\begin{tabular}[c]{@{}c@{}} ($1021-\epsilon_1$, -$\epsilon_1$)\end{tabular}}}\\ \\
\hline
\multirow{2}{*}{$k_{12}$}&\multicolumn{1}{c}{\multirow{2}{*}{\begin{tabular}[c]{@{}c@{}}($1023-\epsilon_1$, -$\epsilon_1$)\end{tabular}}}
&\multicolumn{1}{c}{\multirow{2}{*}{\begin{tabular}[c]{@{}c@{}}($1023-\epsilon_1$, -$\epsilon_1$)\end{tabular}}}
& \multicolumn{1}{c}{\multirow{2}{*}{\begin{tabular}[c]{@{}c@{}}($2\epsilon_1$, -1023+$2\epsilon_1$+$\epsilon_2$)\end{tabular}}}\\ \\

 \arrayrulecolor{black}
\bottomrule
\end{tabular}
\end{center}
% }
% \end{small}

% }
\label{tab:additive2Dvaluationprofile}
\end{table}

% \begin{lemma}
% There is no MMS allocation for the valuation profile $V^{g}$ of Example~\ref{example:goodswithpos}.
% % , i.e., there is no allocation $A$ that guarantees $\forall i, w_i(A_i) \ge 950$ or $u_i(A_i)\ge0.1$. 
% \end{lemma}

\begin{lemma}
\label{lemma:goodswithneg}
There is no $\alpha$-MMS allocation for the valuation profile $V^{g}$ of Example~\ref{example:goodswithpos} for any $\alpha\in[0,1]$.
\end{lemma}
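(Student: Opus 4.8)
The plan is to pass to the transformed $1$-D instance $\mathcal{W}=\mathfrak{T}(V^{g})$ and exploit that the additive MMS gap $\mu_i=10^3\epsilon_1$ is minuscule while $|v'_i(M)|\approx 4055000$ is enormous. First I would rewrite the $\alpha$-MMS requirement for goods ($u_i(A_i)\ge\alpha\mu_i$) using $u_i(A_i)=w_i(A_i)+v'_i(M)$ together with Eq.~\ref{eq:mu_itransform}, which gives $v'_i(M)=\mu_i-\mu_i^{\mathcal{W}}$. Substituting yields the equivalent $1$-D condition $w_i(A_i)\ge \mu_i^{\mathcal{W}}-(1-\alpha)\mu_i$. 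Since $\mu_i=10^3\epsilon_1>0$ and $1-\alpha\in[0,1]$, this threshold is at least $\mu_i^{\mathcal{W}}-\mu_i=4055000-10^3\epsilon_1$ for \emph{every} $\alpha\in[0,1]$; so any $\alpha$-MMS allocation must in particular satisfy $w_i(A_i)\ge 4055000-10^3\epsilon_1$ for all $i$. Hence it suffices to prove the single, weakest statement that no allocation gives all three agents $w_i(A_i)\ge \mu_i^{\mathcal{W}}-\mu_i$, which then rules out $\alpha$-MMS for all $\alpha$ at once.

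Next I would pass to base units (all transformed values and $\mu_i^{\mathcal{W}}$ scale by $10^3$), so the target becomes $w_i(A_i)\ge 4055-\epsilon_1$. Each transformed item value lies within $O(\epsilon_2)$ of an integer $1000+c_k$, where $c=(17,25,12,1,2,22,3,28,11,0,21,23)$, and the only strictly positive perturbation is the one carried by item $k_1$ (its transformed value sits $6\epsilon_2-3\epsilon_1=57\epsilon_1$ above $1017$); every other item is either exact or carries the penalty $-\epsilon_2=-10\epsilon_1$. A short counting step then shows that any bundle of integer value at most $4054$ is worth at most $4054+O(\epsilon)<4055-\epsilon_1$; since the integer total is $3\times4055$, every bundle in a threshold-meeting allocation must have integer value exactly $4055$, forcing each bundle to contain exactly four items. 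I would then enumerate these integer-balanced partitions by solving $\sum_{k\in B}c_k=55$ over $4$-subsets, which yields exactly three partitions $P_A,P_B,P_C$.

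The combinatorial heart is to show none of $P_A,P_B,P_C$ admits an assignment of its bundles to the three agents that keeps every agent's perturbation above $-\epsilon_1$. The mechanism is clean: in each balanced partition exactly one bundle contains $k_1$, and for a given agent a bundle's perturbation is $0$ unless it contains one of that agent's penalized items, in which case it drops to $\le-\epsilon_2=-10\epsilon_1<-\epsilon_1$. I would verify by a finite check that in each of $P_A,P_B,P_C$ at least two of the three agents have the property that every bundle \emph{not} containing $k_1$ includes one of that agent's penalized items, pinning those agents' perturbation at $\le-\epsilon_2$. Those two agents can then be satisfied only by the single $k_1$-bundle, which is impossible; so no balanced assignment meets the threshold, and neither does any non-balanced allocation, completing the argument.

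The main obstacle is not the reduction but establishing \emph{robustness} of the Procaccia--Wang obstruction: ordinary non-existence of MMS only guarantees a shortfall that could be arbitrarily small, whereas here I must show the shortfall strictly exceeds $\mu_i=10^3\epsilon_1$. This is exactly where the calibration $\epsilon_2=10\epsilon_1$ is used, since it forces each penalized item to cost $-\epsilon_2=-10\epsilon_1$, an order of magnitude larger than the available slack $\epsilon_1$; checking that this margin survives across all three partitions and for the ``locked'' pair of agents is the one place the proof needs careful, though entirely finite, bookkeeping. The tie-in with Example~\ref{example:goodswithpos} and Theorem~\ref{thm:transformation} is only motivational here, as the translated threshold lets me argue directly in $\mathcal{W}$.
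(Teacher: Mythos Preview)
Your reduction to the $1$-D threshold is exactly the paper's: both of you translate $u_i(A_i)\ge\alpha\mu_i$ into $w_i(A_i)\ge -v'_i(M)=\mu_i^{\mathcal{W}}-\mu_i=4055000-10^3\epsilon_1$, taking the weakest case $\alpha=0$ (the paper phrases this as ``$u_i(A_i)\ge0$'', which is the same thing since $\mu_i>0$).

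Where you diverge is in discharging that threshold. The paper observes that the transformed instance $\mathfrak{T}(V^g)$ coincides with the Kurokawa--Procaccia--Wang counterexample and asserts that its values are integral; then $w_i(A_i)\ge 4055000-0.1$ forces $w_i(A_i)\ge 4055000=\mu_i^{\mathcal{W}}$, and one simply cites the known non-existence of an MMS allocation there. That is a one-line finish. You instead re-establish the obstruction with an explicit margin: you show every bundle below integer value $4055$ misses the threshold, enumerate the three balanced $4$--$4$--$4$ partitions, and use the calibration $\epsilon_2=10\epsilon_1$ to argue that in each partition two agents are forced onto the unique $k_1$-bundle. This is correct and more self-contained, but considerably longer than what the paper actually needs.

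One remark in your favor: the paper's integrality claim is slightly loose, since $w_{i,k_1}=1017+6\epsilon_2-3\epsilon_1$ is not an integer after the $10^3$ scaling. Your margin analysis does not rely on that claim and therefore closes this gap cleanly; the paper's shortcut works morally (the shortfall in the cited instance exceeds the $0.1$ slack) but your route makes the ``robustness'' you flag as the main obstacle fully explicit.
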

\begin{proof}
% In $V^g$, $\forall i, \mu_i \ge 0$. 
An allocation $A$ is $\alpha$-MMS for $\alpha \ge 0$ \emph{iff} $\forall i, u_i(A_i)\ge \alpha \mu_i \ge 0$ when $\mu_i>0$. Note that the transformed valuations $w_i(A_i) = \mathfrak{T}(V^g_i(A_i))$. From Eq.~\ref{eq:transformTus}, $u_i(A_i)\ge 0$, \emph{iff} $w_i(A_i) \ge -v'_i(M)$, which gives us $w_i(A_i)\ge4055000-0.1$. For this to be true, we need $w_i(A_i) \ge 4055000$ since $\mathfrak{T}(V^g)$ has all integral values. We know that such an allocation doesn't exist~\cite{procacciammsdoesntexistexample}. Hence for any $\alpha \in [0,1]$, $\alpha$-MMS does not exist for $V^g$.
\end{proof}
% If $\mu_i\ge0$,
% For any $\alpha > 0$ and $\alpha$-MMS to exist, the utility of all agents should be $>0$, i.e., $u_i(A_i)>0$. We know that $u_i(A_i) = w_i(A_i) + v_i'(M)$ derived from Eq.~\ref{eq:transformTus}. For alpha-MMS to exist, in the transformed valuation, $w_i(A_i) > -v'_i(M) = 949.9$. Note that all the values in the transformed valuations is integers, and so for any $\alpha$-MMS to exist, $w_i(A_i) \geq 950$, is is not possible from lemma 1. Therefore, we conclude that no allocation can guarantee the least amount of $\alpha$-MMS, i.e., $u_i(A_i)=0$ in the valuation profile of Example~\ref{example:goodswithpos}. 

\noindent\emph{\underline{Non-existence of $1/\alpha$-MMS in Chores.}}
Consider the following example.
\begin{example}
\label{example:choreswithpos}
\emph{We consider a problem of allocating 12 chores among three agents. The valuation profile $V^{c}$ is equivalent to $-10^3V$ given in Table~\ref{tab:additive2Dvaluationprofile}. We set $\epsilon_2=-10^{-3}$. We transform these valuations in 1-D, and $\mathfrak{T}(V^{c})$ is the same as the instance in \cite{azizmmschores} that proves the non-existence of MMS for chores. Note that $v'_i(M)=4055000-10^3\epsilon_1$.  The MMS value of every agent in $\mathfrak{T}(V^{c})$ amd $V^{c}$ is -4055000 and $-10^3\epsilon_1$, respectively. }
\end{example}
% just like the previous example to calculate their MMS share. The MMS share in 1-D, $\mu_i$ is $-950$, and from Eq.~\ref{eq:mu_itransform}, we know that $\mu_i$ is $-950+949.9+\epsilon_1=-0.1+\epsilon_1$ and  $v'_i(M)=949.9+\epsilon_1$.
% An MMS allocation doesn't exist for transformed valuations, and given transformation $\mathfrak{T}$, 

% \begin{lemma}
% \label{lemma:choreswithpos}
% There is no MMS allocation for the valuation profile $V^{c}$ of Example~\ref{example:choreswithpos}.
% % , i.e., there is no allocation $A$ that guarantees $\forall i, w_i(A_i) \geq -950$ or $u_i(A_i)\ge-0.1$. 
% \end{lemma}

\begin{lemma}
\label{lemma:1alphammsdoesntexsist}
There is no $1/\alpha$-MMS allocation for the valuation profile $V^{c}$ of Example~\ref{example:choreswithpos} with $\epsilon_1 \in (0,10^{-4}]$ for any $\alpha>0$.
\end{lemma}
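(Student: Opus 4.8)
The plan is to mirror the argument of Lemma~\ref{lemma:goodswithneg}, exploiting that $\mathfrak{T}$ preserves MMS (Theorem~\ref{thm:transformation}) together with the fact that $\mathfrak{T}(V^{c})$ is precisely the chores instance of \cite{azizmmschores}, for which no exact MMS allocation exists. Fix $\epsilon_1\in(0,10^{-4}]$ and suppose, for contradiction, that some allocation $A$ is $1/\alpha$-MMS in $V^{c}$ for some $\alpha>0$. Since $\mu_i=-10^{3}\epsilon_1<0$ for every agent $i$, the defining inequality reads $u_i(A_i)\ge \tfrac{1}{\alpha}\mu_i$ with $\tfrac{1}{\alpha}\ge 1$. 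First I would rewrite this purely in the transformed profile $\mathcal{W}=\mathfrak{T}(V^{c})$, using $u_i(A_i)=w_i(A_i)+v'_i(M)$ (from Eq.~\ref{eq:transformTus}) and $\mu_i=\mu_i^{\mathcal{W}}+v'_i(M)$ (Eq.~\ref{eq:mu_itransform}), where $\mu_i^{\mathcal{W}}=-4055000$ and $v'_i(M)=4055000-10^{3}\epsilon_1$.

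Substituting and simplifying, the $1/\alpha$-MMS requirement collapses to a lower bound on each transformed bundle value:
\begin{equation*}
w_i(A_i)\;\ge\;\mu_i^{\mathcal{W}}-\Big(\tfrac{1}{\alpha}-1\Big)\,10^{3}\epsilon_1 .
\end{equation*}
Because $\tfrac{1}{\alpha}\ge 1$ and $\epsilon_1>0$, the right-hand side sits at or just below the integer $\mu_i^{\mathcal{W}}=-4055000$. The next step is to invoke the integrality of $\mathfrak{T}(V^{c})$: each $w_i(A_i)$ is a sum of integer item values, hence an integer, so a bound lying in the unit interval immediately below $-4055000$ forces $w_i(A_i)\ge -4055000=\mu_i^{\mathcal{W}}$. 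Thus $A$ would have to satisfy $w_i(A_i)\ge\mu_i^{\mathcal{W}}$ for all $i$, i.e.\ be an \emph{exact} MMS allocation in $\mathcal{W}$. Since $\mathfrak{T}(V^{c})$ is the instance of \cite{azizmmschores} admitting no MMS allocation, this is a contradiction, and the non-existence transfers back to $V^{c}$ by Theorem~\ref{thm:transformation}.

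The main obstacle is this rounding step. The relaxation factor $\tfrac{1}{\alpha}>1$ injects a slack $(\tfrac{1}{\alpha}-1)\,10^{3}\epsilon_1$ into the lower bound on $w_i(A_i)$, and the entire argument hinges on this slack staying strictly below the unit gap between consecutive attainable values of $w_i(A_i)$, so that the $1/\alpha$-MMS condition collapses onto the exact MMS threshold $-4055000$ rather than onto a genuinely relaxed one. This is exactly where the cap $\epsilon_1\le 10^{-4}$ (so that $10^{3}\epsilon_1\le 10^{-1}$) and the deliberate $10^{3}$-scaling of the profile do the work: they keep the externality offset $v'_i(M)$ a controlled, sub-integer distance from the MMS integer. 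Pinning down that this sub-unit slack can never open room below $-4055000$ over the admissible range of $\alpha$ — equivalently, that no allocation can be $1/\alpha$-MMS without already being MMS in $\mathcal{W}$ — is the delicate quantitative heart of the proof, and the only place where the numerics of the construction, rather than the transformation machinery, are genuinely needed.
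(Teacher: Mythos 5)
Your reduction is the paper's own: transform via $\mathfrak{T}$, exploit integrality of $\mathfrak{T}(V^c)$, and collapse the approximate guarantee onto the exact MMS threshold $-4055000$, which is unattainable by \cite{azizmmschores}. Your algebra is also correct — the bound $w_i(A_i)\ge \mu_i^{\mathcal{W}}-\bigl(\tfrac{1}{\alpha}-1\bigr)10^3\epsilon_1$ coincides with the paper's intermediate condition ($u_i(A_i)\ge -1$, i.e., $w_i(A_i)\ge -4055001+10^3\epsilon_1$) at $\alpha=10^3\epsilon_1$ and is tighter for larger $\alpha$. But the step you flag as the ``delicate quantitative heart'' is a genuine gap, and in the form you would need it the claim is \emph{false}. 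For a fixed $\epsilon_1$, the slack $\bigl(\tfrac{1}{\alpha}-1\bigr)10^3\epsilon_1$ is not sub-unit for all $\alpha>0$: it stays below $1$ only when $\alpha>\tfrac{10^3\epsilon_1}{1+10^3\epsilon_1}$. For smaller $\alpha$ the lower bound drops to $-4055001$ and below, integrality no longer forces $w_i(A_i)\ge -4055000$, and the appeal to \cite{azizmmschores} breaks (the non-existence proof there is consistent with some agent attaining exactly one unit under the MMS value). Worse, your closing reformulation — ``no allocation can be $1/\alpha$-MMS without already being MMS in $\mathcal{W}$'' — is simply untrue for tiny $\alpha$: chore utilities are bounded below by $u_i(M)=v_i(M)$, so once $\tfrac{1}{\alpha}\mu_i<\min_i v_i(M)$ (roughly $\alpha\lesssim 10^{-8}$ in this instance) \emph{every} allocation is vacuously $1/\alpha$-MMS in $V^c$. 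No single instance can defeat all $\alpha>0$ simultaneously.

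The paper closes this exactly where you got stuck, by coupling the two parameters rather than fixing $\epsilon_1$: it proves non-existence for all $\alpha\ge 10^3\epsilon_1$ (there $1/\alpha$-MMS implies $u_i(A_i)\ge -1$, hence by Eq.~\ref{eq:transformTus} and integrality $w_i(A_i)\ge -4055000$ for all $i$, contradicting \cite{azizmmschores}), and for smaller $\alpha$ it shrinks $\epsilon_1$, i.e., passes to another member of the family $V^c(\epsilon_1)$ — that is the content of its closing remark that as $\epsilon_1$ decreases the guarantee weakens yet still fails. So the correct quantifier order is: for every $\alpha>0$ there is a choice $\epsilon_1\le\min(10^{-4},\,10^{-3}\alpha)$ whose instance admits no $1/\alpha$-MMS allocation. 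Your write-up is repaired by making this coupling explicit (your own slack computation even yields the marginally weaker sufficient condition $\alpha>\tfrac{10^3\epsilon_1}{1+10^3\epsilon_1}$), but the blanket assertion that the right-hand side ``sits at or just below'' $-4055000$ for all $\alpha>0$ does not hold as written.
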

\begin{proof}
An allocation $A$ is $1/\alpha$-MMS for $\alpha>0$ $\emph{iff}$ $\forall i, u_i(A_i)\ge \frac{1}{\alpha} \mu_i$ when $\mu_i<0$. We set $\epsilon_1 \le 10^{-4}$ in $V^c$. When $\alpha\ge10^3\epsilon_1$ $\forall i$ then $u_i(A_i) \ge -1$.
% since -1/alpha >= -1/10^3\epsilon_1
% We have $1/\alpha=1/\epsilon_1$ and hence $u_i(A_i)\ge-1$. 
From Eq.~\ref{eq:transformTus}, $u_i(A_i)\ge -1$ \emph{iff} $w_i(A_i)\ge-4055001+10^3\epsilon_1$. Note that $0<10^3\epsilon_1\le0.1$ and since $w_i(A_i)$ has only integral values, we need $\forall i, w_i(A_i)\ge -4055000$. Such $A$ does not exist \cite{azizmmschores}. As $\epsilon_1$ decreases, $1/\epsilon_1$ increases, and even though approximation guarantees weakens, it still does not exist for $V^{c}$. 
\end{proof}

From Lemma~\ref{lemma:goodswithneg} and \ref{lemma:1alphammsdoesntexsist} we conclude the following theorem,
\begin{theorem}
\label{thm:mulmmsdoesntexist}
There may not exist $\alpha$-MMS for any $\alpha\in[0,1]$ for $\mu_i>0$ or $1/\alpha$-MMS allocation for any $\alpha \in (0,1]$ for $\mu_i<0$ in the presence of externalities.
% Multiplicative Approximation of MMS doesn't necessarily exists in presence of externalities.
\end{theorem}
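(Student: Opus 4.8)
The plan is to prove the theorem as an immediate consequence of the two non-existence lemmas, Lemma~\ref{lemma:goodswithneg} and Lemma~\ref{lemma:1alphammsdoesntexsist}, which between them cover the two regimes of the statement: goods with inverse externality (where $\mu_i>0$) and chores with inverse externality (where $\mu_i<0$). So the real content lives in the constructions underlying those lemmas, and I would organise the argument around a single reduction template applied to both regimes.

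The template is as follows. First I would invoke a known 1-D instance on which MMS provably fails to exist---the three-agent, twelve-item instance of~\cite{procacciammsdoesntexistexample} for goods and the corresponding instance of~\cite{azizmmschores} for chores. Next I would reverse-engineer a 2-D valuation profile $V$ (Example~\ref{example:goodswithpos}, respectively Example~\ref{example:choreswithpos}) so that $\mathfrak{T}(V)$ equals this known instance. Since Theorem~\ref{thm:transformation} shows $\mathfrak{T}$ preserves MMS, exact MMS already fails in $V$; the point is to upgrade non-existence of MMS to non-existence of \emph{every} multiplicative approximation. The lever is Eq.~\ref{eq:mu_itransform}, namely $\mu_i=\mu_i^{\mathcal{W}}+v_i'(M)$: I would tune the externality terms, through the parameters $\epsilon_1,\epsilon_2$ and an overall scaling by $10^3$, so that $v_i'(M)$ almost exactly cancels the large 1-D MMS value $\mu_i^{\mathcal{W}}$, leaving $\mu_i$ a vanishingly small positive (goods) or negative (chores) quantity.

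With $\mu_i$ this small, the multiplicative slack becomes irrelevant. Translating the $\alpha$-MMS demand $u_i(A_i)\ge\alpha\mu_i$ (goods, $\mu_i>0$) or the $1/\alpha$-MMS demand $u_i(A_i)\ge\tfrac{1}{\alpha}\mu_i$ (chores, $\mu_i<0$) back through $u_i(A_i)=w_i(A_i)+v_i'(M)$, the weakest possible requirement collapses to essentially $u_i(A_i)\ge 0$ in the goods case and, for $\epsilon_1$ chosen small enough that $\alpha\ge 10^3\epsilon_1$, to $u_i(A_i)\ge -1$ in the chores case. Because $\mathfrak{T}(V)$ is integer-valued, these inequalities force $w_i(A_i)\ge\mu_i^{\mathcal{W}}$ exactly for every agent, i.e.\ an exact MMS allocation in 1-D, which does not exist. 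This rounding step is what makes the argument tight and is precisely what the two lemmas carry out.

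I expect the main obstacle to be the construction itself rather than the final combination: one must simultaneously (i) match $\mathfrak{T}(V)$ to the precise integral non-existence instance, (ii) drive $\mu_i$ arbitrarily close to zero so that no fixed $\alpha$ can rescue feasibility---which in the chores case additionally requires letting $\epsilon_1$ shrink with the target $\alpha$---and (iii) exploit integrality to close the gap between the weakened 2-D bound and the full 1-D MMS threshold. Once Lemmas~\ref{lemma:goodswithneg} and~\ref{lemma:1alphammsdoesntexsist} are established, the theorem follows by simply noting that each regime of the statement, $\mu_i>0$ and $\mu_i<0$, is witnessed by one of the two constructed instances.
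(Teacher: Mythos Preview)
Your proposal is correct and matches the paper's approach exactly: the theorem is obtained directly from Lemmas~\ref{lemma:goodswithneg} and~\ref{lemma:1alphammsdoesntexsist}, and your summary of those lemmas---reverse-engineering 2-D profiles whose $\mathfrak{T}$-image is the known integral non-existence instance, tuning $v_i'(M)$ via $\epsilon_1,\epsilon_2$ and the $10^3$ scaling so that $\mu_i$ is infinitesimal, and then using integrality of $\mathfrak{T}(V)$ to collapse any $\alpha$-relaxed bound back to exact 1-D MMS---faithfully reproduces the paper's argument.
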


Interestingly, in 1-D, $\alpha$-MMS's non-existence is known for $\alpha$ value close to 1 \cite{procacciammsdoesntexistexample,feige2021tight}, while in 2-D, it need not exist even for $\alpha=0$. It follows because $\alpha$-MMS may not be lead to any relaxation in the presence of inverse externalities.

Consider the situation of goods having negative externalities, where MMS share $\mu_i$ comprises of the positive value from the assigned bundle $A_i$ and negative value from the unassigned bundles $A_{-i}$. 
 We re-write $\mu_i$ as follows, $\mu_i=\mu_i^{+} + \mu_i^{-}$ where $\mu_i^+$ corresponds to utility from assigned goods/unassigned chores and $\mu_i^-$ corresponds to utility from unassigned goods/assigned chores. When $\mu_i \ge 0$, applying $\alpha \mu_i$ is not only relaxing positive value $\alpha \mu_i^+$, but also requires $\alpha \mu_i^-$ which is stricter than $\mu_i^-$ since $\mu_i^- < 0$. Hence, the impossibility of  $\alpha$-MMS in 2-D. Similar argument holds for chores with positive externalities.

In the next section, we explore relaxing MMS such that it is guaranteed to exist in 2-D.

% Among fairness notions such as EF and PROP, MMS is the weakest, and we consider its multiplicative approximation in 1-D; However, in 2-D, due to impossibilities, we intend to relax the MMS approximation to guarantee its existence.

% but the idea behind approximately MMS allocation is to relax the utility by a multiplicative factor. When we extend $\alpha$-MMS to 2-D domain, for example, for goods with negative externality and  $\mu_i\ge0$, when we apply a multiplicative factor to agents' utility, we are not only relaxing $\mu^+_i$ but also making $\mu_i^-$ stronger. 

\subsection{Re-defining Approximate MMS}
\label{sec:shiftedalphamms}
In this section, we define \emph{Shifted $\alpha$-MMS} that guarantees a fraction of MMS share shifted by certain value, such that it always exist in 2-D. We also considered intuitive ways of approximating MMS in 2-D. These ways are based on relaxing the positive value obtained from MMS allocation $\mu^+$ and the negative value $\mu^-$, $\mu = \mu^+ + \mu^-$. In other words, we look for allocations that guarantee $\alpha \mu^+$ and $(1+\alpha)$ or $1/\alpha$ of $\mu^-$. Unfortunately, such approximations may not always exist. We provide detailed explanation in the Appendix Section~\ref{sec:alphammsotherdefinition}. 
% \sm{check shifted alpha mms is alpha mms}
% We propose three approximate MMS definitions, such that it relaxes both utility and dis-utility obtained. The first two definition Def.~\ref{def:alphamms1} and \ref{def:alphamms2} is based on relaxing $\mu^+$ and $\mu^-$ simultaneously. Unfortunately we show that they need not exist in Proposition~\ref{prop:alphammsdef1proof} and \ref{prop:alphammsdef2proof} and provide its proof in supplementary. In Def.~\ref{def:alphamms3}, we shift approximate maximim guarantees by a parameter based on each agent's externalities. We show that such a shifted $\alpha$-MMS always exists in Proposition~\ref{prop:shiftedalphamms}. Note that shifted $\alpha$-MMS is equivalent to $\alpha$-MMS in absence of externality.

% \begin{definition}[$\alpha$-MMS (I.)]
% \label{def:alphamms1}
% An allocation $A$ is said to be $\alpha$-MMS if it guarantees 
% $$\forall i \in N, u_i(A_i) \ge \alpha \cdot \mu^+_i + (1+\alpha) \cdot  \mu^-_i$$ where $\alpha \in [0,1]$.
% \end{definition}

% \begin{definition}[$\alpha$-MMS (II.)]
% \label{def:alphamms2}
% An allocation $A$ is said to be $\alpha$-MMS if it guarantees 
% $$\forall i \in N, u_i(A_i) \ge \alpha \cdot \mu^+_i + (1/\alpha) \cdot  \mu^-_i$$ where $\alpha \in (0,1]$.
% \end{definition}

\begin{definition}[Shifted $\alpha$-MMS]
\label{def:alphamms3}
An allocation $A$ guarantees shifted $\alpha$-MMS 
if $\forall i \in N, \alpha\in(0,1]$
\begin{align}
    & \begin{rcases}
 u_i(A_i) \ge \alpha \mu_i + (1-\alpha) v'_i(M)\\
    \end{rcases} 
    &\mbox{ for goods }  \nonumber \\
     & \begin{rcases}
 u_i(A_i) \ge \frac{1}{\alpha}\mu_i + \frac{\alpha-1}{\alpha} v'_i(M)
    \end{rcases} 
    & \mbox{ for chores} \nonumber
\end{align}
\end{definition}

\begin{proposition}
\label{prop:shiftedalphamms}
An allocation $A$ is $\alpha$-MMS in $\mathcal{W}$ \emph{iff} $A$ is shifted $\alpha$-MMS (Def.~\ref{def:alphamms3}) in $\mathcal{V}$.
\end{proposition}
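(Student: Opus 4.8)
The plan is to prove the biconditional by unfolding the definition of shifted $\alpha$-MMS (Def.~\ref{def:alphamms3}) in $\mathcal{V}$ and substituting the two relations established earlier: the transformation identity $w_i(A_i) = u_i(A_i) - v'_i(M)$ from Eq.~\ref{eq:transformTus}, and the MMS-value relation $\mu_i = \mu_i^{\mathcal{W}} + v'_i(M)$ from Eq.~\ref{eq:mu_itransform}. The heart of the argument is purely algebraic: every term involving $v'_i(M)$ should cancel, leaving the clean statement $w_i(A_i) \ge \alpha\mu_i^{\mathcal{W}}$ (goods) or $w_i(A_i) \ge \tfrac{1}{\alpha}\mu_i^{\mathcal{W}}$ (chores), which is exactly $\alpha$-MMS in $\mathcal{W}$.

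First I would treat the goods case. The defining inequality in $\mathcal{V}$ is $u_i(A_i) \ge \alpha\mu_i + (1-\alpha)v'_i(M)$. I substitute $u_i(A_i) = w_i(A_i) + v'_i(M)$ on the left and $\mu_i = \mu_i^{\mathcal{W}} + v'_i(M)$ on the right, giving
\begin{equation*}
w_i(A_i) + v'_i(M) \ge \alpha\bigl(\mu_i^{\mathcal{W}} + v'_i(M)\bigr) + (1-\alpha)v'_i(M).
\end{equation*}
The right-hand side simplifies: $\alpha v'_i(M) + (1-\alpha)v'_i(M) = v'_i(M)$, so the $v'_i(M)$ terms on both sides cancel, yielding $w_i(A_i) \ge \alpha\mu_i^{\mathcal{W}}$, which is precisely $\alpha$-MMS in $\mathcal{W}$. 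Since each step is a reversible equivalence, this establishes the biconditional for goods.

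For chores the computation is analogous but uses the $1/\alpha$ form. Starting from $u_i(A_i) \ge \tfrac{1}{\alpha}\mu_i + \tfrac{\alpha-1}{\alpha}v'_i(M)$ and making the same two substitutions, the right-hand side becomes $\tfrac{1}{\alpha}\mu_i^{\mathcal{W}} + \tfrac{1}{\alpha}v'_i(M) + \tfrac{\alpha-1}{\alpha}v'_i(M)$, whose $v'_i(M)$ coefficients sum to $\tfrac{1}{\alpha} + \tfrac{\alpha-1}{\alpha} = 1$, matching the $+v'_i(M)$ carried over from the left. Cancelling leaves $w_i(A_i) \ge \tfrac{1}{\alpha}\mu_i^{\mathcal{W}}$, the definition of $\alpha$-MMS for chores in $\mathcal{W}$. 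I anticipate no genuine obstacle here — the shifted $\alpha$-MMS definition was evidently reverse-engineered so that the shift term $(1-\alpha)v'_i(M)$ (resp.\ $\tfrac{\alpha-1}{\alpha}v'_i(M)$) exactly absorbs the discrepancy between the $v'_i(M)$ factors introduced by the utility offset and by the MMS-value offset. The only point requiring mild care is bookkeeping the sign and the $1/\alpha$ scaling in the chores case so that the coefficients of $v'_i(M)$ correctly collapse to $1$; once that is verified, reversibility of every algebraic step delivers the \emph{iff}.
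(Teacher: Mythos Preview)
Your proposal is correct and follows essentially the same approach as the paper: substitute $u_i(A_i)=w_i(A_i)+v'_i(M)$ and $\mu_i=\mu_i^{\mathcal{W}}+v'_i(M)$ into the shifted $\alpha$-MMS inequality, observe that the $v'_i(M)$ terms cancel, and note reversibility for the biconditional. The paper's proof is slightly terser but otherwise identical in both the goods and chores cases.
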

\begin{proof}
For goods, if allocation $A$ is shifted $\alpha$-MMS, $\forall i , u_i(A_i) \ge \alpha \mu_i + (1-\alpha)v'_i(M)$.
Applying $\mathfrak{T}$, we get $w_i(A_i) + v_i'(M) \ge \alpha\mu_i^{\mathcal{W}} + \alpha v'_i(M)  + (1-\alpha)v'_i(M)$ which gives $w_i(A_i) \ge \alpha \mu_i^{\mathcal{W}}$.
For chores, if $A$ is shifted $1/\alpha$-MMS, $\forall i, u_i(A_i) \ge \frac{1}{\alpha}\mu_i + \frac{(\alpha-1)}{\alpha}v'_i(M)$. Applying $\mathfrak{T}$ gives $w_i(A_i) \ge \frac{1}{\alpha}\mu_i^{\mathcal{W}}$.
Similarly we can prove vice versa.
\end{proof}

\begin{corollary} We can adapt all the existing algorithms for $\alpha$-MMS in $\mathcal{W}$ to get shifted $\alpha$-MMS in $\mathcal{V}$.
\end{corollary}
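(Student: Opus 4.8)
The plan is to obtain the corollary as an immediate consequence of the equivalence established in Proposition~\ref{prop:shiftedalphamms}, combined with the structural guarantees of Lemma~\ref{lemma:winormalized}. Concretely, given a 2-D instance $(N,M,\mathcal{V})$, the first step is to compute $\mathcal{W}=\mathfrak{T}(\mathcal{V})$ via Definition~\ref{def:Tu}. Since $\mathfrak{T}$ is given by the closed-form expression in Eq.~\ref{eq:transformTus}, this step is polynomial-time and hence does not affect the running-time content implicit in the word ``adapt.''

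The second step is to verify that $\mathcal{W}$ lies in the input domain assumed by the off-the-shelf $\alpha$-MMS algorithms. This is exactly what Lemma~\ref{lemma:winormalized} supplies: for goods under monotonicity, $\mathcal{W}$ is normalized, monotone, and non-negative, matching the hypotheses of the goods algorithms (e.g.\ the $3/4+1/(12n)$ guarantee of \cite{GargMMS2020}); for chores under anti-monotonicity, $\mathcal{W}$ is normalized, anti-monotone, and non-positive, matching the chores algorithms (e.g.\ the $11/9$ guarantee of \cite{Xinhuangmmschores}). I would stress that this holds uniformly for both correlated and inverse externality, because in each case the sign of $v'_{ik}$ is precisely absorbed in $w_i(A_i)=v_i(A_i)-v'_i(A_i)$, so no separate case analysis per externality type is required.

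The third step runs the chosen existing algorithm on $\mathcal{W}$ to produce an allocation $A$ that is $\alpha$-MMS (resp.\ $1/\alpha$-MMS) with respect to $\mathcal{W}$. Finally, invoking the ``only if'' direction of Proposition~\ref{prop:shiftedalphamms} converts this guarantee back into a shifted $\alpha$-MMS guarantee for the original profile $\mathcal{V}$, which closes the argument.

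Because every substantive inequality has already been discharged in Proposition~\ref{prop:shiftedalphamms}, there is no genuinely hard analytic step here; the corollary is essentially a packaging of the reduction. The only point that demands care is the domain-compatibility check in the second step, namely confirming that the transformed valuations meet \emph{every} precondition an existing algorithm assumes (normalization, the correct direction of monotonicity, and sign-definiteness), so that the algorithm's correctness proof transfers verbatim from $\mathcal{W}$ back to $\mathcal{V}$ through $\mathfrak{T}$.
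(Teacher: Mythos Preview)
Your proposal is correct and follows the same route as the paper: the corollary is stated there without proof, as an immediate consequence of Proposition~\ref{prop:shiftedalphamms} together with the fact (Lemma~\ref{lemma:winormalized}) that $\mathcal{W}=\mathfrak{T}(\mathcal{V})$ lands in the standard 1-D domain on which the existing $\alpha$-MMS algorithms operate. Your write-up simply makes the reduction explicit (compute $\mathcal{W}$, check domain compatibility, run the algorithm, pull the guarantee back via the equivalence), which is exactly the intended reading; the only minor quibble is that the formula $w_i(A_i)=v_i(A_i)-v'_i(A_i)$ you quote is the additive specialization, whereas Lemma~\ref{lemma:winormalized} covers the general monotone case via $w_i(S)=u_i(S)-v'_i(M)$.
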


We use $\mathfrak{T}$ and apply the existing algorithms for additive valuations as well as general valuations and obtain the corresponding shifted multiplicative approximations.
Since a direct multiplicative approximation of MMS need not exist in presence of externalities, we consider additive relaxation of MMS in the next section.

\subsubsection{Additive Relaxation of MMS}
\begin{definition}[MMS relaxations]
\label{def:mms1}
An allocation $A$ that satisfies, $\forall i,j \in N$, MMSX i.e., MMS upto any item,
\begin{align}
    & \begin{rcases}
\forall \ k \in \{M \setminus A_i\}, 
v_{ik}>0,      u_i(A_i \cup \{k\}) \ge \mu_i\\
\forall \ k \in A_i,v_{ik}<0, u_i(A_i \setminus \{k\}) \ge   \mu_i
    \end{rcases} \\
    &\mbox{ satisfies MMS1 (Maximin Share up to an item)} \nonumber \\
     & \begin{rcases}
\exists \ k \in \{M \setminus A_i\},     u_i(A_i \cup \{k\}) \ge \mu_i,  \mbox{ or,}\\
\exists \ k \in A_i, u_i(A_i \setminus \{k\}) \ge \mu_i
    \end{rcases} 
\end{align}
\end{definition}

% \sm{verify}
\begin{proposition}
From Theorem~\ref{thm:transformation}, we conclude that MMS1 and MMSX retains after transformation.
\end{proposition}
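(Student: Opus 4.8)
The plan is to reduce both claims to the single algebraic identity that already drives Theorem~\ref{thm:transformation}: for every bundle $S \subseteq M$ we have $w_i(S) = u_i(S) - v'_i(M)$. This follows directly from Definition~\ref{def:Tu}, since the allocation is complete and disjoint so $A_{-i} = M \setminus A_i$, giving $w_i(S) = v_i(S) + v'_i(M\setminus S) - v'_i(M) = u_i(S) - v'_i(M)$. The key observation is that this holds not only for $A_i$ but for any subset of $M$; in particular it applies to the perturbed bundles $A_i \cup \{k\}$ and $A_i \setminus \{k\}$ appearing in Definition~\ref{def:mms1}. Combined with Eq.~\ref{eq:mu_itransform}, namely $\mu_i = \mu_i^{\mathcal{W}} + v'_i(M)$, the whole argument becomes a cancellation of the constant $v'_i(M)$.

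First I would handle MMSX. Take the goods case: $A$ is MMSX in $\mathcal{V}$ iff for all $i$ and all $k \in M \setminus A_i$ with $v_{ik}>0$, $u_i(A_i \cup \{k\}) \ge \mu_i$. Substituting the identity on the left and Eq.~\ref{eq:mu_itransform} on the right yields $w_i(A_i \cup \{k\}) + v'_i(M) \ge \mu_i^{\mathcal{W}} + v'_i(M)$, i.e., $w_i(A_i \cup \{k\}) \ge \mu_i^{\mathcal{W}}$, which is exactly MMSX in $\mathcal{W}$. The chore case is identical with $A_i \setminus \{k\}$ replacing $A_i \cup \{k\}$. As every step is an equivalence, MMSX holds in $\mathcal{V}$ iff it holds in $\mathcal{W}$. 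The MMS1 case is analogous: each of its two disjuncts, $u_i(A_i \cup \{k\}) \ge \mu_i$ for some $k$ and $u_i(A_i \setminus \{k\}) \ge \mu_i$ for some $k$, transforms term-by-term into the corresponding 1-D disjunct, the existential quantifier being unaffected by the cancellation.

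I do not expect a genuine obstacle here, and it is worth contrasting this with the failure of the multiplicative $\alpha$-MMS in Section~\ref{sec:alphamms}: there the factor $\alpha$ multiplies $\mu_i^{\mathcal{W}} + v'_i(M)$ but only $\mu_i^{\mathcal{W}}$ on the 1-D side, so $\alpha$ does not commute with the additive shift $v'_i(M)$ and the equivalence breaks. By contrast, the additive relaxations MMS1/MMSX leave the threshold $\mu_i$ un-scaled, so the $v'_i(M)$ terms cancel cleanly. The only point requiring care is that the quantifiers range over the same candidate items in $\mathcal{V}$ and $\mathcal{W}$; this is guaranteed by Lemma~\ref{lemma:winormalized}, which shows $\mathfrak{T}$ maps a goods instance to a monotone non-negative one and a chores instance to an anti-monotone non-positive one, so an item is a good (chore) in $\mathcal{V}$ iff it is in $\mathcal{W}$, and hence the items added to or removed from $A_i$ are the same in both settings.
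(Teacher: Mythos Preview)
Your proposal is correct and follows exactly the approach the paper intends: the paper gives no separate proof of this proposition, treating it as an immediate consequence of the shift-by-constant mechanism in Theorem~\ref{thm:transformation} (see the MMS case in Appendix~\ref{sec:completeproofoftheorem1}, where the proof is literally $v_i(A_i)+v'_i(A_{-i})-v'_i(M)\ge \mu_i - v'_i(M)$). You have simply spelled out for MMS1/MMSX the same cancellation of $v'_i(M)$ using $w_i(S)=u_i(S)-v'_i(M)$ together with Eq.~\ref{eq:mu_itransform}, and your additional remark that Lemma~\ref{lemma:winormalized} ensures the good/chore classification of each item is unchanged under $\mathfrak{T}$ (so the quantifier ranges coincide) is a nice point the paper leaves implicit.
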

EF1 is a stronger fairness notion than MMS1 and can be computed in polynomial time. On the other hand, PROPX might not exist for goods~\cite{PROPXAziz2019}. Since PROPX implies MMSX, it is interesting to settle the existence of MMSX for goods. Note that MMSX and Shifted $\alpha$-MMS are not related. It is interesting to study these relaxations further, even in full externalities.

% MMS1 is too trivial, and can be simply allocated using round robin. For goods, paper~\cite{PROPXAziz2019} proved that PROPX might not exist even for additive valuations. For chores, paper~\cite{propxchores2021} proved that (weighted) PROPX always exists and can be computed in polynomial time  for additive valuations. Note that PropX implies MMSX and Prop1 implies MMS1.  However, we leave it as an open question whether MMSX for goods exists or not. If MMSX doesn't exist, then in presence of 2-D, it is safe to say that shifted $\alpha$-MMS is the best we can achieve as relaxed MMS fairness. However, if it does exist, based on the efficiency goal of the system, MMSX or shifted $\alpha$-MMS can be selected.

% \begin{proposition}
% No connection between MMSX and Shifted MMS.
% \end{proposition}
% \begin{proof}

% \end{proof}

\section{Conclusion}
In this paper, we conducted a study on indivisible item allocation with special externalities -- 2-D externalities. We proposed a simple yet compelling transformation from 2-D to 1-D to employ existing algorithms to ensure many fairness and efficiency notions. We can adapt existing fair division algorithms via the transformation in such settings. We proposed proportionality extension in the presence of externalities and studied its relation with other fairness notions. For MMS fairness, we proved the impossibility of multiplicative approximation of MMS in 2-D, and we proposed Shifted $\alpha$-MMS instead. There are many exciting questions here which we leave for future works. (i) It might be impossible to have fairness-preserving valuation transformation for general externalities. However, what are some interesting domains where such transformations exist? (ii) What are interesting approximations to MMS in 2-D as well as in general externalities?

% In this paper, we conducted a study on indivisible resource allocation with special externalities -- 2-D externalities. In 2-D, externalities introduce complexities which we carefully analyzed in the paper. We proposed a simple yet very powerful transformation from 2-D to 1-D such that one can employ existing algorithms to ensure many fairness and efficiency notions. Thus, in many settings, existing fair division algorithms can be adapted via the transformation. We proposed proportionality extension in the presence of externalities and studied its relation with other fairness notions. For Maximin (MMS) fairness, we proved the impossibility of multiplicative approximation to MMS in 2-D and we proposed Shifted $\alpha$-MMS instead. There are many interesting questions here which we leave it for future works. (i) It might be impossible to have fairness preserving valuation transformation for general externalities. However, what are interesting domains where such transformations are possible? (ii) What are interesting approximations to MMS in 2-D as well as general externalities.  

% We observed EQ and MEW require more analysis in this setting, as we couldn't retain these notions using transformation. 
% Also, we leave the analysis for the general $n$-D for future work.

%%%%%%%%%%%%%%%%%%%%%%%%%%%%%%%%%%%%%%%%%%%%%%%%%%%%%%%%%%%%%%%%%%%%%%%%

\bibliographystyle{unsrtnat}
\bibliography{references}  %%% Uncomment this line and comment out the ``thebibliography'' section below to use the external .bib file (using bibtex) .

\section*{Appendix}
\appendix
\section{Complete Proof of Theorem 1}
\label{sec:completeproofoftheorem1}
% \subsection{Proof of Lemma 1}
% \begin{lemma}
% \label{lemma:winormalized} Under monotonicity of $\mathcal{V}$,
% $\mathcal{W}=\mathfrak{T}(\mathcal{V}(\cdot))$ is normalized, monotonic (anti-monotonic), and non-negative (negative) for goods (chores).
% \end{lemma}
% \begin{proof}
% We assume monotonicity of utility for goods in 2-D, i.e., $\forall i \in N, u_i$ is monotone. Therefore, for an $S\subseteq M, w_i(S) = u_i(S) - v'_i(M)$ is also monotone since it is shifted $u_i$. Further, $w_i(\emptyset)=v_i(\emptyset) + v'_i(M)-v'_i(M) = 0$ is normalized. Since $w_i$ is monotone and normalized, it is non negative for goods. Similarly we can prove for chores.
% \end{proof}

% \subsection{Proof of Theorem 1}
\begin{theorem}
\label{thm:transformation}
An Allocation $A$ is $\mathfrak{F}$-Fair and $\mathfrak{E}$-Efficient in $\mathcal{V}$ \emph{iff} $A$ is $\mathfrak{F}$-Fair and $\mathfrak{E}$-Efficient in the transformed 1-D $\mathcal{W}$, where $\mathfrak{F} \in \{\mbox{EF, EF1, EFX, PROP-E, PROP1-E, PROP1-X, MMS}\}$ and $\mathfrak{E} \in \{\mbox{PO, MUW}\}$.
\end{theorem}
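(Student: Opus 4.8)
The plan is to reduce the entire statement to a single structural fact about $\mathfrak{T}$: it shifts each agent's utility by an allocation-independent constant. Concretely, since every allocation is complete, $A_{-i}$ is exactly $M\setminus A_i$, so for any bundle $S\subseteq M$ one has $u_i(S)=v_i(S)+v'_i(M\setminus S)$, and reading off the defining Eq.~\ref{eq:transformTus},
$$w_i(S)=u_i(S)-v'_i(M).$$
The quantity $v'_i(M)$ depends only on $i$, not on $A$ or on $S$. This identity is the engine of the whole proof, it holds for arbitrary (not merely additive) valuations, and once it is established every item in the lists $\mathfrak{F}$ and $\mathfrak{E}$ becomes a one-line verification.

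First I would dispatch the notions whose defining inequalities compare $u_i$ evaluated at two bundles, namely EF, EF1, and EFX. In each such comparison both sides are shifted by the same constant $v'_i(M)$ (this already holds for bundles with an item added or removed, since the identity above is stated for every $S$), so the inequality in $\mathcal{V}$ holds exactly when the corresponding inequality in $\mathcal{W}$ holds. The paper's displayed computation for EF is the prototype; EF1 and EFX follow verbatim by substituting $A_i\setminus\{k\}$, $A_j\setminus\{k\}$, and so on. For efficiency, PO is preserved because $S\mapsto u_i(S)-v'_i(M)$ is, for each fixed $i$, a strictly increasing affine map of $u_i$; thus $u_i(A'_i)\ge u_i(A_i)$ iff $w_i(A'_i)\ge w_i(A_i)$, and likewise for strict inequalities, so an allocation is Pareto-dominated in $\mathcal{V}$ iff it is Pareto-dominated in $\mathcal{W}$.

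Next I would treat the notions in which the comparison threshold itself moves. For PROP-E (and its relaxations PROP1-E, PROPX-E) the right-hand side is an average over exactly $n$ bundles, so $\frac{1}{n}\sum_{j}w_i(A_j)=\frac{1}{n}\sum_j u_i(A_j)-v'_i(M)$; the threshold therefore shifts by precisely $v'_i(M)$, matching the shift of the left-hand side $w_i(A_i)$, and the PROP-E inequality transfers. For MMS the shift commutes with the nested $\max$-$\min$ defining $\mu_i$: replacing $u_i$ by $u_i-v'_i(M)$ in every bundle of every partition subtracts the same constant from each candidate value, giving $\mu_i=\mu^{\mathcal{W}}_i+v'_i(M)$, which is exactly Eq.~\ref{eq:mu_itransform}; hence $u_i(A_i)\ge\mu_i$ iff $w_i(A_i)\ge\mu^{\mathcal{W}}_i$. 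Finally, MUW is preserved because $\sum_i w_i(A_i)=\sum_i u_i(A_i)-\sum_i v'_i(M)$ and $\sum_i v'_i(M)$ is independent of $A$, so the welfare-maximizing allocations coincide in $\mathcal{V}$ and $\mathcal{W}$.

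There is no deep obstacle once the constant-shift identity is in place; the real care is bookkeeping. The point most worth checking is that for each notion the threshold transforms by the correct multiple of $v'_i(M)$: one constant per compared bundle for EF/EF1/EFX, one constant for the MMS benchmark, and, crucially, exactly one constant (not $n$) for the PROP-E average, which is what makes the $1/n$ guarantee invariant. For MUW one must note the shift is a sum $\sum_i v'_i(M)$ over agents yet still allocation-independent, and for MMS one must justify interchanging the constant with $\max$-$\min$; both are immediate, but they are the places where a careless argument could slip.
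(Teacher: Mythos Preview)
Your proposal is correct and follows essentially the same approach as the paper's proof: both rest on the identity $w_i(S)=u_i(S)-v'_i(M)$ and then verify each notion by observing that the relevant inequalities are invariant under this agent-dependent, allocation-independent shift. The paper proceeds notion by notion with explicit substitutions (and additionally remarks that EQ, MNW, and MEW are \emph{not} preserved), whereas you foreground the constant-shift identity and group the notions by how their thresholds transform; but the underlying argument is identical.
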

\begin{proof}
\noindent\textbf{Fairness notions.}

% We first consider $\mathfrak{F} =$ EF.
%  An allocation is EF if $\forall i,j \in N, u_i(A_i) \ge u_i(A_j)$.
% Let allocation $A$ be EF in $\mathcal{W}$ then, $\forall i, \forall j \in N$
% \begin{equation*}
% \begin{aligned}
%  w_i(A_i) &\ge w_i(A_j)
% \\
% v_i(A_i) + v'_i(A_{-i}) - v'_i(M) &\ge v_i(A_j) + v'_i(A_{-j})  -  v'_i(M)
% \\
% u_i(A_i)  &\ge u_i(A_j)
% \end{aligned}
% \end{equation*}
% The reverse implication follows similarly.
% \smallskip

Considering $\mathfrak{F} =$ EF1. An allocation is EF1 if $\forall i,j \in N, u_i(A_i) \ge u_i(A_j)$, or $\exists k \in \{A_i \cup A_j\}$, s.t., $u_i(A_i \setminus \{ k\}) \ge u_i(A_j \setminus \{k\})$. When $k$ is good for agent $i$, then $k \in A_j$ and when $k$ is chore to agent $i$, then $k \in A_i$.
Let allocation $A$ be EF1 in $\mathcal{W}$ then, $\forall i, \forall j \in N, \exists k \in A_j$, i.e., $k$ is a good for agent $i$ in $A_j$ bundle
\begin{equation*}
\begin{aligned}
w_i(A_i) &\ge w_i(A_j \setminus \{k\}) \\
v_i(A_i) + v'_i(A_{-i})
&\ge v_i(A_j \setminus \{k\}) + v'_i(A_{-j} \cup \{k\})  \\
u_i(A_i)  &\ge u_j(A_j\setminus \{k\})  
\end{aligned}
\end{equation*}

When $k \in A_i$, i.e., $k$ is a chore for agent $i$,
\begin{equation*}
\begin{aligned}
w_i(A_i \setminus \{k\}) &\ge w_i(A_j) \\
v_i(A_i \setminus \{k\}) + v'_i(A_{-i} \cup \{k\}) 
&\ge v_i(A_j ) + v'_i(A_{-j} )  \\
u_i(A_i \setminus \{k\})  &\ge u_j(A_j)  
\end{aligned}
\end{equation*}
The reverse implication follows similarly.
The proof of EFX is similar to that of EF1.

Moving on to PROP-E. Consider $\mathfrak{F}=PROP-E$. An allocation is said to be PROP-E if it satisfies, $\forall i \in N, u_i(A_i) \geq \frac{1}{n}\cdot\sum_{j\in N} u_i(A_j)$.

\begin{equation*}
\begin{aligned}
w_i(A_i) &\ge \frac{1}{n}\cdot\sum_{j\in N} w_i(A_j)
\\
v_i(A_i) + v'_i(A_{-i}) - v'_i(M) &\ge \frac{1}{n}\cdot\sum_{j\in N}[v_i(A_j) + v'_i(A_{-j})  -  v'_i(M)]
\\
u_i(A_i) -v'_i(M)  &\ge \frac{1}{n}\cdot\sum_{j\in N} u_i(A_j) -v'_i(M)
\\
u_i(A_i)  &\ge \frac{1}{n}\cdot\sum_{j\in N} u_i(A_j)
\end{aligned}
\end{equation*}

Considering relaxation of PROP-E, we will prove it for PROP1-E, and the proof for PROPX-E follows in a similar fashion.
An allocation is said to be PROP1-E if it satisfies, $\exists \ k \in \{M \setminus A_i\},     u_i(A_i \cup \{k\}) \ge \frac{1}{n}\cdot\sum_{j\in N} u_i(A_j)$, i.e., item $k$ is good for agent $i$, or $\exists \ k \in A_i, u_i(A_i \setminus \{k\}) \ge \frac{1}{n}\cdot\sum_{j\in N} u_i(A_j)$, i.e., item $k$ is chore for agent $i$. 
Let allocation $A$ be PROP1-E in $\mathcal{W}$ then, $\forall i \in N, \exists k \in \{M \setminus A_i\}$, i.e., $k$ is a good for agent $i$,
\begin{equation*}
\begin{aligned}
w_i(A_i \cup \{k\}) &\ge \frac{1}{n}\cdot\sum_{j\in N} w_i(A_j)
\\
v_i(A_i \cup \{k\}) + v'_i(A_{-i} \setminus \{k\})
&\ge  
\frac{1}{n}\cdot\sum_{j\in N}[
v_i(A_j) + v'_i(A_{-j})] \\
u_i(A_i \cup \{k\})  &\ge
\frac{1}{n}\cdot\sum_{j\in N} u_i(A_j)
\end{aligned}
\end{equation*}

When $k \in A_i$, i.e., $k$ is a chore for agent $i$,

\begin{equation*}
\begin{aligned}
% w_i(A_i \setminus \{k\}) &\ge \frac{1}{n}\cdot\sum_{j\in N} w_i(A_j)
% \\
v_i(A_i \setminus \{k\}) + v'_i(A_{-i} \cup \{k\})
&\ge  
\frac{1}{n}\cdot\sum_{j\in N}[
v_i(A_j) + v'_i(A_{-j})] 
% \\ u_i(A_i \setminus \{k\})  &\ge\frac{1}{n}\cdot\sum_{j\in N} u_i(A_j)
\end{aligned}
\end{equation*}

Next, we show the prove for MMS allocation.
An allocation is said to be MMS, if each agent gets at least its maximin share value, i.e., $\forall i \in N, u_i(A_i) \ge \mu_i$, where
$$\mu_i = \max_{(A_1,A_2,\ldots,A_n) \in \prod_n(M)} \min_{i \in N} u_i(A_i)$$

$\mathfrak{F} =$ MMS and let allocation $A$ be MMS in $\mathcal{W}$ then, $\forall i \in N$
\begin{equation*}
\begin{aligned}
%  w_i(A_i) &\ge  \mu^{\mathcal{W}}_i
% \\
v_i(A_i) + v'_i(A_{-i}) - v'_i(M) &\ge \mu_i  -  v'_i(M)
% \\
% u_i(A_i)  &\ge \mu_i
\end{aligned}
\end{equation*}

\smallskip
We now consider EQ for which $\mathfrak{T}$ does not retain. First, we define EQ and its relaxations. An allocation $A$ is said to be equitable, when $\forall i,j \in N$, $u_i(A_i) = u_j(A_j)$.
An allocation $A$ is said to be EQ1, i.e., Equitable up to one item, $u_i(A_i \setminus \{k\}) \ge u_j(A_j \setminus \{k\}), \ \exists k \in \{A_i \cup A_j\}$. An allocation $A$ is said to be EQX, i.e., Equitable up to any item, $u_i(A_i ) \ge u_j(A_j \setminus \{k\}), \ \forall k \in A_j$ and $v_{ik}\ge 0$, and $u_i(A_i \setminus \{k\}) \ge u_j(A_j), \ \forall k \in A_i$, and $v_{ik}\le 0$.
Consider the example, where $N =\{1,2\}$ and $M =\{g_1, g_2,g_3,g_4\}$. The 2-D additive valuations for agent $1$  for
$g_1: (3, -6)$, $g_2: (3, -6)$, $g_3: (1, -3)$, and $g_4: (1, -3)$.
For agent 2, the additive valuations for $g_1: (1, -8)$, $g_2: (1, -8)$, $g_3: (3, -6)$, and $g_4: (3, -6)$.
% The transformed valuation for the same is, $w_{1g_1}=9$, $w_{1g_2}=9$, $w_{1g_3}=4$, $w_{1g_4}=4$, 
% $w_{2g_1}=9$, $w_{2g_2}=9$, $w_{1g_3}=9$, and $w_{2g_4}=9$.
\begin{itemize}
    \item $A=\{(g_1,g_2),(g_3,g_4)\}$ is EQ in ${\mathcal{W}}$, but is not even EQ1 in ${\mathcal{V}}$. 
    \item $A=\{(g_3,g_4),(g_1,g_2)\}$ is EQ in ${\mathcal{V}}$ but is not even EQ1 in ${\mathcal{W}}$.
\end{itemize}

Thus, among fairness notions, $\mathfrak{T}$ retains EF, PROP-E, MMS and their additive relaxations. 

\smallskip
\noindent\textbf{Efficiency notions.}
We will discuss efficiency notions such as PO, MUW, MNW, and MEW.

We first consider $\mathfrak{E}=$ PO.
An allocation $A$ is Pareto Optimal (PO) if $\; \nexists \;  A'$ s.t.,
$ \forall i \in N$, $u_i(A'_i) \ge u_i(A_i)$ and $ \exists i \in N$, $u_i(A'_i) > u_i(A_i) $.
Let allocation $A$ be PO in $\mathcal{W}$, i.e., $\; \nexists \;  A'$ s.t., $ \forall i \in N$, $w_i(A'_i) \ge w_i(A_i)$ and $ \exists i \in N$, $W_i(A'_i) > W_i(A_i) $.
We can re-write that, 
$$
w_i(A'_i) + v'_i(M) \ge w_i(A_i) + v'_i(M) $$ $$ \exists i \in N, w_i(A'_i) + v'_i(M) > w_i(A_i) + v'_i(M)
$$
$A$ is PO in $\mathcal{V}$
Similarly, we can prove the reverse implication.
% \smallskip

It is easy to verify that MUW allocation is also retained under transformation $\mathfrak{T}$.
% We can similarly proof for MUW allocation.
% Considering $\mathfrak{E}=$MUW. An allocation $A*$ is MUW if $A^* \in \max_{A}\sum_{i=1}^{n} u_i(A_i)$. Let allocation $A*$ be MUW in $\mathcal{W}$, 
% \begin{equation*}
% \begin{aligned}
%  A^* &\in \max_{A}\sum_{i=1}^{n} w_i(A_i)
% \\
%  A^* &\in \max_{A}\sum_{i=1}^{n} w_i(A_i) + v_i'(M)
% \\
%  A^* &\in \max_{A}\sum_{i=1}^{n} v_i(A_i)
% \end{aligned}
% \end{equation*}
% Similarly, we can prove the reverse implication.

% \smallskip

We cannot define MNW in presence of externalities, as for goods, agents can have positive as well as negative utility. 

% \smallskip

% Now, let $\mathfrak{E}=$ MEW. 
Note MEW is not retained using $\mathfrak{T}$.
Consider two agents $\{1,2\}$ and two goods $\{1,2\}$. Agents have additive valuations. $V_{11}=(8,-16)$, $V_{12}=(10,-15)$, $V_{21}=(5,-1)$, $V_{22}=(6,-2)$. 
$MEW(\mathcal{V}) = \{ (g_1,g_2),$ $(\emptyset)\} $, while $MEW(\mathcal{W}) =  \{(g_1), (g_2)\}$.

Among efficiency notions, we can retain PO and MUW using transformation $\mathfrak{T}$.
\end{proof}

\section{PROP-E and Average Share}
\label{sec:propeaverageshare}
We compare PROP-E and Average Share beyond 2-D. First we define valuation space in the presence of full externalities. 
The valuation function for $n$ agents is denoted by $\mathcal{V}=\{V_1,V_2,\ldots,V_n\}$. For each $i\in N$, $V_i:2^M \rightarrow \mathbb{R}^n$, i.e., $V_i \in \mathbb{R}^{n^{2^M}}$. Further, for any bundle $S\subseteq M$, $V_i(S) = (v_{i1}(S),v_{i2}(S),\ldots,v_{in}(S))$, where $v_{ij}(S)$ denotes the value agent $i$ receives when bundle $S$ is assigned to agent $j$.

\begin{proposition}
Beyond 2-D, PROP-E $\centernot\implies$ Average Share and Average Share $\centernot\implies$ PROP-E.
\end{proposition}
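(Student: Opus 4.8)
The plan is to work in the full-externality model of the appendix, where agent $i$'s additive valuation assigns the number $v_{ij}(\{k\})$ to the event ``item $k$ is given to agent $j$'', so that the utility of any allocation $B=(B_1,\dots,B_n)$ is $u_i(B)=\sum_{j\in N}v_{ij}(B_j)$. Following \cite{velez2016fairness,aziz2021fairnessexternalities}, I read the term $u_i(A_j)$ in the PROP-E benchmark as $u_i(A^{i\leftrightarrow j})$, the utility of $i$ in the allocation $A^{i\leftrightarrow j}$ obtained from $A$ by having $i$ and $j$ exchange bundles. Since the two notions coincide for $n=2$ by Proposition~\ref{prop:propeandaverageshare} (there full externality is exactly 2-D), both counterexamples must use $n\ge 3$; I would fix $n=3$, three items $k_1,k_2,k_3$, and the allocation $A=(\{k_1\},\{k_2\},\{k_3\})$, the smallest setting that exposes the gap.

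First I would expand both conditions for a focal agent $1$. Writing $v_{\ell,t}:=v_{1\ell}(\{k_t\})$ with $\ell$ the recipient and $t$ the item, the PROP-E requirement $2u_1(A)\ge u_1(A^{1\leftrightarrow2})+u_1(A^{1\leftrightarrow3})$ reduces, after cancellation, to a linear inequality in only the entries $v_{1,1},v_{1,2},v_{1,3}$ (items given to agent $1$), $v_{2,1},v_{2,2}$, and $v_{3,1},v_{3,3}$. The Average-Share benchmark $u_1(A)\ge\tfrac13\sum_{t}\sum_{\ell}v_{\ell,t}$ instead weights every recipient of every item equally. The crucial structural observation is that the single-transposition allocations appearing in PROP-E never place $k_2$ with agent $3$ nor $k_3$ with agent $2$, so the two ``doubly-displaced'' entries $v_{3,2}$ and $v_{2,3}$ occur in the Average-Share inequality but are entirely absent from the PROP-E inequality. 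This decoupling is what drives both non-implications.

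Given this, the two witnesses are immediate. For PROP-E $\centernot\implies$ Average Share I would keep the PROP-E-relevant entries fixed so that agent $1$'s PROP-E inequality holds (with equality, say) and then inflate $v_{3,2}$ and $v_{2,3}$ to large positive values; this raises the Average-Share benchmark above $u_1(A)$ while leaving PROP-E untouched. For Average Share $\centernot\implies$ PROP-E I would instead choose the PROP-E-relevant entries so that agent $1$'s PROP-E inequality fails, and drive $v_{3,2},v_{2,3}$ sufficiently negative that the Average-Share average is pulled back below $u_1(A)$. In both instances I would fill in agents $2$ and $3$ with simple self-centred profiles (each valuing essentially only its own bundle) so that their own PROP-E and Average-Share constraints hold trivially, making each instance a valid witness for exactly one of the two conditions.

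The main obstacle is not the focal inequalities---those separate cleanly thanks to the missing cross terms---but the bookkeeping around the periphery: I must verify that agents $2$ and $3$ simultaneously satisfy whichever condition is supposed to hold globally (all of PROP-E in the first instance, all of Average Share in the second), and I must commit to the precise reading of the swap convention, since an alternative interpretation of $u_i(A_j)$ under full externalities could change which entries enter the PROP-E benchmark. A secondary point is sign discipline: if one insists on pure goods or pure chores the inflated or deflated cross terms may break monotonicity, but the general-externality regime under study permits mixed signs, so I would simply allow $v_{3,2},v_{2,3}$ to take either sign. Carrying out the two explicit assignments then yields both non-implication claims.
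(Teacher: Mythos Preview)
Your approach is essentially the paper's: for $n=3$ under the swap interpretation of $u_i(A_j)$ you expand the two right-hand sides and isolate the entries $v_{13}(A_2)$ and $v_{12}(A_3)$ (your $v_{3,2},v_{2,3}$), which enter the Average-Share benchmark but not the PROP-E one. The paper's proof stops at exactly that formula comparison and declares the conclusion; your proposal goes further by describing how to tune those free cross terms to build explicit witnesses in each direction, so it is in fact more complete than what the paper writes.
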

\begin{proof}
We will show that there is no relation between PROP-E and average share beyond 2-D, for that we will consider $n=3$.

An Allocation $A$ is said to be PROP-E, $u_i(A_i) \ge 1/n \cdot \sum_{j \in N} u_i(A_j)$. Let $i=\{1\}$
\begin{equation}
\begin{aligned}
\label{eq:prope}
u_1(A_1) &\ge  1/3 \cdot \Big[ u_1(A_1) + u_1(A_2) + u_1(A_3) \Big] \\
u_1(A_1) &\ge 1/3 \cdot   \Big[v_{11}(A_1)+v_{12}(A_2)+v_{13}(A_3) +  \\
&v_{11}(A_2)+v_{12}(A_1)+v_{13}(A_3) +  \\
&v_{11}(A_3)+v_{12}(A_2)+v_{13}(A_1) \Big]\\
% u_1(A_i) &\ge \sum_{j \in N} v_{11}(A_) \\
\end{aligned}
\end{equation}

An Allocation $A$ is said to be average share, $u_i(A_i) \ge 1/n \cdot \sum_{k \in M}\sum_{j \in N} v_{ij}(k)$. Let $i=\{1\}$
\begin{equation}
\begin{aligned}
\label{eq:avgshare}
u_1(A_1) &\ge 1/3 \cdot  \Big[v_{11}(A_1)+v_{12}(A_1)+v_{13}(A_1) +  \\
&v_{11}(A_2)+v_{12}(A_2)+v_{13}(A_2) +  \\
&v_{11}(A_3)+v_{12}(A_3)+v_{13}(A_3) \Big]\\
% u_1(A_i) &\ge \sum_{j \in N} v_{11}(A_) \\
\end{aligned}
\end{equation}

From Eq.~\ref{eq:prope} and ~\ref{eq:avgshare}, we conclude there is no relation between PROP-E and average share beyond 2-D.
\end{proof}

% \section{Proof of Proposition 4 Part-2 (Chores)}
% As stated, For correlated externality, if an allocation $A$ is $\alpha$-MMS in $\mathcal{W}$, $A$ is $\alpha$-MMS in $\mathcal{V}$, but need not vice versa.
% We prove the Proposition 4 - Part-2 for chores with negative externality using the following example. 
% \begin{example}
% \label{example:choreswithneg}
% Consider $N=\{1,2\}$ both having additive identical valuations for 3 chores $M=\{c_1,c_2,c_3\}$ given by $V_{1c_1}=(-40,-36)$, $V_{1c_2}=(-110,-70)$, and $V_{1c_3}=(-109,-71)$. Note that $v'_i(M)=-177$. After transformation, we get $\mu_i^{\mathcal{W}}=-42$ and $\mu_i^{\mathcal{V}}=-219$. Let us consider $1/\alpha=4/3$, then $w_i(A_i)\ge -56$ and $u_i(A_i)\ge -292$. Allocation $A=\{(c_1,c_2,c_3),\emptyset\}$ is 4/3-MMS in $\mathcal{V}$, but not in $\mathcal{W}$. 
% \end{example}

\section{Re-defining Approximate MMS}
\label{sec:alphammsotherdefinition}
 We re-write $\mu_i$ as follows, $\mu_i=\mu_i^{+} + \mu_i^{-}$ where $\mu_i^+$ corresponds to utility from assigned goods/unassigned chores and $\mu_i^-$ corresponds to utility from unassigned goods/assigned chores.
We propose two more approximate MMS definitions, such that it relaxes both utility and dis-utility obtained. The first two definition Def.~\ref{def:alphamms1} and \ref{def:alphamms2} is based on relaxing $\mu^+$ and $\mu^-$ simultaneously. Unfortunately we show that they need not exist in Lemma~\ref{prop:alphammsdef1proof} and \ref{prop:alphammsdef2proof}.
\begin{example} In order to prove this, we make few changes in the valuation profile $V^g$ of Example 3 and represent it as $V^{G}$.
We set $V^{G}_{1k_{10}}=(1000-\epsilon_1+\epsilon_3,-\epsilon_1)$, $V^{G}_{2k_{10}}=(1000-\epsilon_1+\epsilon_3,-\epsilon_1)$, and $V^{G}_{1k_{4}}=(1001-\epsilon_1+\epsilon_3,-\epsilon_1)$. We set $\forall i, V^{G}_{ik_8}=(1028-\epsilon_1+\epsilon_3,-\epsilon_1)$. We multiply 10 to $V^{G}$. We set $\epsilon_1 \le 10^{-5}$ $\epsilon_2=10^{-3}$ and $\epsilon_3=10^{-4}$ in the valuation profile $V^{G}$. We consider $\epsilon_3=10^{-4}$ so that agents have unique MMS bundle, for example, agent $1$ unique MMS bundle is $\{k_1,k_2,k_3,k_4\}$. We transform these valuations in 1-D using $\mathfrak{T}$, and the valuation profile $\mathfrak{T}(V^{G})$ is similar to the instance in \cite{procacciammsdoesntexistexample}, and it is easy to verify that MMS allocation doesn't exist. The MMS value of every agent in $\mathfrak{T}(V^{G})$ and $V^{G}$ is 40550000 and $10^4\epsilon_1$, respectively. Note that $\mu_i = \mu^+ + \mu^-$, $\mu^+_i=9\cdot10^4\epsilon_1$, and $\mu^-_i=-8\cdot10^4\epsilon_1$. Also $v'_i(M)=-40550000+10^4\epsilon_1$.
\end{example}

\begin{definition}[$\alpha$-MMS (I.)]
\label{def:alphamms1}
An allocation $A$ is said to be $\alpha$-MMS if it guarantees 
$$\forall i \in N, u_i(A_i) \ge \alpha \cdot \mu^+_i + (1+\alpha) \cdot  \mu^-_i$$ where $\alpha \in [0,1]$.
\end{definition}

\begin{definition}[$\alpha$-MMS (II.)]
\label{def:alphamms2}
An allocation $A$ is said to be $\alpha$-MMS if it guarantees 
$$\forall i \in N, u_i(A_i) \ge \alpha \cdot \mu^+_i + (1/\alpha) \cdot  \mu^-_i$$ where $\alpha \in (0,1]$.
\end{definition}

Unfortunately we cannot ensure $\alpha$-MMS according to definition~\ref{def:alphamms1} and ~\ref{def:alphamms2} in 2-D. 
\begin{lemma}
\label{prop:alphammsdef1proof} There is no $\alpha$-MMS (Def.~\ref{def:alphamms1}) for the  valuation profile $V^{G}$ for any $\alpha\in[0,1]$.
\end{lemma}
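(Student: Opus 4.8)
The plan is to follow the template of Lemma~\ref{lemma:goodswithneg}: rewrite the Def.~\ref{def:alphamms1} requirement as a lower bound on the \emph{transformed} utilities $w_i(A_i)=\mathfrak{T}(V^G_i(A_i))$, and then exploit the integrality of $\mathfrak{T}(V^G)$ to force that bound up to the exact (non-existent) MMS value. First I would plug the stated component values $\mu_i^{+}=9\cdot 10^4\epsilon_1$ and $\mu_i^{-}=-8\cdot 10^4\epsilon_1$ into the right-hand side of Def.~\ref{def:alphamms1}. A direct simplification collapses the $\alpha$-dependence considerably:
$$\alpha\mu_i^{+}+(1+\alpha)\mu_i^{-}=(\alpha-8)\cdot 10^4\epsilon_1,$$
so any allocation $A$ meeting the definition must satisfy $u_i(A_i)\ge(\alpha-8)\cdot 10^4\epsilon_1$ for every $i\in N$.

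Next I would pass to the 1-D instance. Since the valuations are additive, Eq.~\ref{eq:transformTus} gives $u_i(A_i)=w_i(A_i)+v'_i(M)$, and the Example fixes $v'_i(M)=-40550000+10^4\epsilon_1$. Substituting turns the requirement into
$$w_i(A_i)\ \ge\ 40550000+(\alpha-9)\cdot 10^4\epsilon_1 \qquad \forall i\in N.$$
The crux is now the size of the correction term. For $\alpha\in[0,1]$ we have $\alpha-9\in[-9,-8]$, and the choice $\epsilon_1\le 10^{-5}$ forces $10^4\epsilon_1\le 0.1$, so $(\alpha-9)\cdot 10^4\epsilon_1\in[-0.9,0)$. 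Hence the right-hand side lies in $[40549999.1,\,40550000)$, i.e.\ strictly above $40549999$.

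Because $\mathfrak{T}(V^G)$ has only integral values, the bound $w_i(A_i)\ge 40549999.1$ is equivalent to $w_i(A_i)\ge 40550000=\mu_i^{\mathcal{W}}$. Thus a Def.~\ref{def:alphamms1} allocation would have to give every agent at least her MMS value in $\mathfrak{T}(V^G)$ simultaneously, i.e.\ an MMS allocation in $\mathfrak{T}(V^G)$. But no MMS allocation exists in $\mathfrak{T}(V^G)$, as noted in the Example (it is a variant of the Procaccia--Wang instance~\cite{procacciammsdoesntexistexample}), giving the contradiction and finishing the proof for all $\alpha\in[0,1]$.

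I expect the only genuinely delicate point to be the uniform bound on the correction term: one must verify $(\alpha-9)\cdot 10^4\epsilon_1\in(-1,0)$ over the whole range $\alpha\in[0,1]$, so that the integrality rounding lands exactly on $\mu_i^{\mathcal{W}}$ rather than one unit below. The prescribed $\epsilon_1\le 10^{-5}$ is precisely what guarantees this, and everything else reduces to the same substitution-and-integrality argument already used in Lemma~\ref{lemma:goodswithneg}.
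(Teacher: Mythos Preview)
Your proof is correct and follows essentially the same route as the paper: translate the Def.~\ref{def:alphamms1} bound on $u_i(A_i)$ into a lower bound on $w_i(A_i)$ via $u_i=w_i+v'_i(M)$, then use integrality of $\mathfrak{T}(V^G)$ to force $w_i(A_i)\ge\mu_i^{\mathcal{W}}$, contradicting non-existence of MMS in the transformed instance. The only difference is that the paper checks the single weakest case $\alpha=0$ (with $\epsilon_1=10^{-5}$ fixed), implicitly relying on the fact that $\alpha\mu_i^{+}+(1+\alpha)\mu_i^{-}$ is increasing in $\alpha$, whereas you carry $\alpha$ through explicitly and bound the correction term uniformly over $[0,1]$; this is a presentational refinement, not a different argument.
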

\begin{proof}
% Let  $\epsilon_1=10^{-5}$, $\epsilon_2=10^{-3}$ and $\epsilon_3=10^{-4}$ in the valuation profile $V^{g1}$. We consider $\epsilon_3=10^{-4}$ so that agents have unique MMS bundle, for example, agent $1$ unique MMS bundle is $\{k_1,k_2,k_3,k_4\}$. We transform these valuations in 1-D using $\mathfrak{T}$, and the valuation profile $\mathfrak{T}(V^{g1})$ is similar to the instance in \cite{procacciammsdoesntexistexample}, and it is easy to verify that MMS allocation doesn't exist. The MMS value of every agent in $\mathfrak{T}(V^{g1})$ and $V^{g1}$ is 40550000 and 0.1, respectively. Note that $\mu_i = \mu^+ + \mu^-$, $\mu^+_i=0.9$, and $\mu^-_i=0.8$. Also $v'_i(M)=-4055000+10^4\epsilon_1$.
Let $\epsilon_1=10^{-5}$.
An allocation is $\alpha$-MMS for $\alpha \ge 0$ \emph{iff} $\forall i, u_i(A_i)\ge   \ge 0 \cdot \mu_i^{+} + (1+0) \mu_i^{-} = \mu_i^{-}$. Note that From Eq. 2, $u_i(A_i)\ge \mu_i^-$, \emph{iff} $w_i(A_i) \ge \mu_i^--v'_i(M)$, which gives us $w_i(A_i)\ge40550000-0.8$. For this to be true, we need $w_i(A_i) \ge 40550000$ since $\mathfrak{T}(V^{G})$ has all integral values. We know that such an allocation doesn't exist. Hence for any $\alpha \in [0,1]$, $\alpha$-MMS does not exist for $V^{G}$.
\end{proof}

\begin{lemma}
\label{prop:alphammsdef2proof}
An $\alpha$-MMS (Def.~\ref{def:alphamms2}) allocation may not exist.
\end{lemma}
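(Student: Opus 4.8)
The plan is to reuse the machinery that already proved Lemma~\ref{prop:alphammsdef1proof}, since Definitions~\ref{def:alphamms1} and \ref{def:alphamms2} differ only in how they scale the negative component $\mu_i^-$. The key observation is that both definitions coincide at the extreme value $\alpha=1$, where $(1+\alpha)=2$ and $(1/\alpha)=1$ do differ, so I cannot directly recycle the $\alpha=0$ argument. Instead, I would first examine which value of $\alpha$ makes the requirement weakest (easiest to satisfy) for Definition~\ref{def:alphamms2}, and show that even the weakest requirement is unsatisfiable for the profile $V^{G}$.

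First I would fix $\epsilon_1=10^{-5}$ as in the previous lemma and recall the computed quantities: $\mu_i^+ = 9\cdot 10^4\epsilon_1$, $\mu_i^- = -8\cdot 10^4\epsilon_1$, and $v'_i(M) = -40550000 + 10^4\epsilon_1$, with $\mathfrak{T}(V^G)$ having all integral values and MMS value $40550000$. For Definition~\ref{def:alphamms2} the guarantee is $u_i(A_i) \ge \alpha\mu_i^+ + (1/\alpha)\mu_i^-$. Since $\mu_i^+>0$ and $\mu_i^-<0$, the right-hand side is maximized (the constraint is loosest) when $\alpha$ is as large as possible, i.e.\ $\alpha=1$, giving $\mu_i^+ + \mu_i^- = \mu_i = 10^4\epsilon_1$. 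I would then apply the transformation via Eq.~\ref{eq:transformTus}: $u_i(A_i) \ge \mu_i$ holds \emph{iff} $w_i(A_i) \ge \mu_i - v'_i(M) = 40550000$. Because $\mathfrak{T}(V^G)$ is the integral MMS-non-existence instance of \cite{procacciammsdoesntexistexample}, no allocation achieves $w_i(A_i)\ge 40550000$ for all $i$, so even the loosest instance of Definition~\ref{def:alphamms2} fails.

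The structural point I want to make explicit is that for any $\alpha\in(0,1]$ we have $\alpha\mu_i^+ + (1/\alpha)\mu_i^- \ge \mu_i^+ + \mu_i^- = \mu_i$, since decreasing $\alpha$ from $1$ only shrinks the positive term and enlarges the magnitude of the negative term. Thus the $\alpha=1$ case is genuinely the easiest, and its impossibility implies impossibility for every $\alpha$. The main obstacle is simply confirming this monotonicity direction carefully and checking the integrality-rounding step $w_i(A_i)\ge 40550000 - 0.8 \Rightarrow w_i(A_i)\ge 40550000$ goes through exactly as in Lemma~\ref{prop:alphammsdef1proof}; once that is in hand the result follows immediately from the non-existence of MMS in $\mathfrak{T}(V^G)$.
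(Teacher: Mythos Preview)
Your monotonicity claim is reversed, and this breaks the entire argument. You correctly observe that decreasing $\alpha$ from $1$ shrinks the positive term $\alpha\mu_i^+$ and enlarges the magnitude of the negative term $(1/\alpha)\mu_i^-$; but that means the sum $f(\alpha)=\alpha\mu_i^+ + (1/\alpha)\mu_i^-$ \emph{decreases} as $\alpha$ decreases, so $f(\alpha)\le f(1)=\mu_i$ on $(0,1]$, not $f(\alpha)\ge\mu_i$. Concretely, with $\mu_i^+=0.9$ and $\mu_i^-=-0.8$ (your $\epsilon_1=10^{-5}$), $f(1)=0.1$ while $f(0.5)=0.45-1.6=-1.15$. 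Hence $\alpha=1$ is the \emph{tightest} constraint, not the loosest; showing it fails merely reproves that exact MMS fails in $\mathfrak{T}(V^G)$, which was already the starting point.

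Worse, a single fixed instance cannot work here: since $f(\alpha)\to-\infty$ as $\alpha\to0^+$, for sufficiently small $\alpha$ the requirement $u_i(A_i)\ge f(\alpha)$ drops below $u_i(\emptyset)=v'_i(M)$ and is satisfied by every allocation. So for the profile $V^G$ with $\epsilon_1=10^{-5}$, Definition~\ref{def:alphamms2} \emph{is} satisfiable for all small enough $\alpha$. The paper sidesteps this by using a \emph{family} of instances: for each target $\alpha$ it chooses $\epsilon_1$ so that $\alpha=8\cdot10^4\epsilon_1$, computes $f(\alpha)=72\cdot10^8\epsilon_1^2-1$, and shows via the transformation and integrality that this forces $w_i(A_i)\ge 40550000$, which is impossible. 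Decreasing $\epsilon_1$ lets $\alpha$ range over arbitrarily small values while the impossibility persists. You need this instance-per-$\alpha$ device (or something equivalent); a uniform argument over all $\alpha$ from one instance cannot succeed.
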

\begin{proof}
Consider $\alpha=8\cdot10^4\epsilon_1$ and since $\epsilon_1\le10^{-5}$, we obtain $\forall i, u_i(A_i)\ge 0.72-1$. From Eq. 2, $u_i(A_i)\ge 0.72-1$, \emph{iff} $w_i(A_i) \ge 0.72-1+40550000-10^4\epsilon_1$.
Since $0<10^4\epsilon_1\le0.1$ and for this to be true, we need $w_i(A_i) \ge 40550000$ since $\mathfrak{T}(V^{G})$ has all integral values. Note that as $\alpha\ge8\cdot10^4$, $u_i(A_i)\ge0.72-1$, i.e., approximation guarantees strengthens.
As we decrease $\epsilon_1$; we decrease $\alpha$ which is $8\cdot10^4\epsilon_1$ and even though we weaker the approximation guarantees, $\alpha$-MMS still doesn't exist.

% We know that such an allocation doesn't exist. Hence for any $\alpha \in [0,1]$, $\alpha$-MMS does not exist for $V^{g1}$.
% Consider $\epsilon_1=0$, $\epsilon_2=0$ and $\epsilon_3=1$. $\mu_i=0.1=0.9-0.8$. For $\alpha=0.5$, $u_i^{2-D}\ge -1.15$. So $u_i^{1-D}\ge 948.75$,
% Allocation $A=\{(k_2,k_6,k_7,k_{11}),(k_1,k_4,k_9,k_{10}),(
% (k_3,k_5,k_8,k_{12})\}$ is 0.5-MMS.
% For $\alpha=0.8$, $w_i(A_i)\ge 0.72-1=-0.28$ and $w_i(A_i)\ge 949.62$, we need to ensure $u_i^{\mathcal{W}}\ge 950$, which is not feasible. 

% Let $\epsilon=0.09$, $\mu_i=0.01=0.09-0.08$. For $\alpha=0.1$, $u_i^{\mathcal{V}}\ge-0.791$. So $u_i^{1-D}\ge949.199$, which is not feasible. For $\alpha=0.05$, $u_i^{2-D}\ge -1.5955$, $u_i^{1-D}\ge 948.3945$, and Allocation $A=\{(k_2,k_6,k_7,k_{11}),$ $(k_1,k_4,k_9,k_{10}),$ $(
% (k_3,k_5,k_8,k_{12})\}$ is 0.05-MMS.

% Let $\epsilon=0.009$, for $\alpha=0.01$, $\alpha$-MMS doesn't exist. 

% As we decrease $\epsilon$, $\alpha$ decreases.
% For any $\alpha$, we can always find such a valuation profile such that $\alpha$-MMS doesn't exist.
\end{proof} 
%%%%%%%%%%%%%%%%%%%%%%%%%%%%%%%%%%%%%%%%%%%%%%%%%%%%%%%%%%%%%%%%%%%%%%%%

%%% Uncomment this section and comment out the \bibliography{references} line above to use inline references.
% \begin{thebibliography}{1}

% 	\bibitem{kour2014real}
% 	George Kour and Raid Saabne.
% 	\newblock Real-time segmentation of on-line handwritten arabic script.
% 	\newblock In {\em Frontiers in Handwriting Recognition (ICFHR), 2014 14th
% 			International Conference on}, pages 417--422. IEEE, 2014.

% 	\bibitem{kour2014fast}
% 	George Kour and Raid Saabne.
% 	\newblock Fast classification of handwritten on-line arabic characters.
% 	\newblock In {\em Soft Computing and Pattern Recognition (SoCPaR), 2014 6th
% 			International Conference of}, pages 312--318. IEEE, 2014.

% 	\bibitem{hadash2018estimate}
% 	Guy Hadash, Einat Kermany, Boaz Carmeli, Ofer Lavi, George Kour, and Alon
% 	Jacovi.
% 	\newblock Estimate and replace: A novel approach to integrating deep neural
% 	networks with existing applications.
% 	\newblock {\em arXiv preprint arXiv:1804.09028}, 2018.

% \end{thebibliography}

\end{document}